\documentclass[hidelinks]{siamart251216}

\usepackage{amsmath,amssymb,amsfonts}
\usepackage{graphicx}
\usepackage{textcomp}
\usepackage{url}
\usepackage[final]{microtype}
\usepackage{eso-pic}

\newsiamthm{assumption}{Assumption}
\newsiamthm{example}{Example}
\newsiamremark{remark}{Remark}

\definecolor{twilightcenter}{RGB}{47,20,54}

\graphicspath{figures/}

\headers{Estimating Evolving Functions with Dynamic Gaussian Processes}{J.~van Hulst, W.P.M.H.~Heemels, and D.~Antunes}

\title{Estimating Evolving Functions with Dynamic Gaussian Processes%
  \thanks{The research is carried out as part of the ITEA4 20216 ASIMOV project.
  The ASIMOV activities are supported by the Netherlands Organisation for Applied
  Scientific Research TNO and the Dutch Ministry of Economic Affairs and Climate
  (project number: AI211006). The research leading to these results is partially
  funded by the German Federal Ministry of Education and Research (BMBF) within
  the project ASIMOV-D under grant agreement No.\ 01IS21022G [DLR], based on a
  decision of the German Bundestag.}}

\author{J.S. van Hulst%
  \thanks{Control Systems Technology Section, Department of Mechanical Engineering,
    Eindhoven University of Technology, the Netherlands
    (corresponding author: \email{j.s.v.hulst@tue.nl}).}
\and W.P.M.H. Heemels\footnotemark[2]
\and D.J. Antunes\footnotemark[2]}

\begin{document}

\maketitle

\AddToShipoutPictureFG*{%
    \AtPageUpperLeft{%
        \setlength\unitlength{1in}%
        \hspace*{\dimexpr0.5\paperwidth\relax}%
        \makebox(0,-1.55)[c]{%
            \footnotesize
            \parbox{0.92\paperwidth}{\centering
                J.S. van Hulst \emph{et al.}, ``Estimating Evolving Functions with Dynamic Gaussian Processes.''\\
                This manuscript is a preprint submitted to a SIAM journal.
            }%
        }%
    }%
}%

\AddToShipoutPictureFG*{%
\AtPageLowerLeft{%
\setlength\unitlength{1in}%
\hspace*{\dimexpr0.5\paperwidth\relax}%
\makebox(0,-0.3)[c]{%
\footnotesize
\copyright~The Authors}}}


\begin{abstract}
This paper develops the Dynamic Gaussian Process (DGP), a framework for estimating functions governed by integro-difference equations (IDEs). IDEs model continuous functions that evolve with discrete-time dynamics and arise naturally from time-discretization of linear partial differential equations (PDEs). The DGP extends Gaussian process regression to time-varying functions and extends Kalman filtering to infinite-dimensional states. The DGP posterior remains a Gaussian process with closed-form mean and covariance updates, and separable kernel structure reduces the problem to a finite-dimensional Kalman filter on basis function coefficients. This paper extends the DGP to vector-valued states, enabling the treatment of higher-order PDEs, and provides a stability and approximation error analysis for the basis function approximation.
The functional $L_2$ estimation error decomposes exactly into in-subspace and out-of-subspace contributions, and all approximation errors vanish as the number of basis functions grows. The framework is demonstrated on the heat equation and on the wave equation, the latter with a vector-valued state. Code is available at \url{https://github.com/JvHulst/Dynamic_Gaussian_Processes}.
\end{abstract}

\begin{keywords}
Kalman filtering, Gaussian processes, spatio-temporal models, reduced-order filtering, infinite-dimensional systems, partial differential equations
\end{keywords}

\section{Introduction}
Estimating evolving continuous functions appears in many applications, including heat flow, battery thermal management, fluid dynamics, and population dynamics. Standard estimation methods, however, either address estimation in finite-dimensional spaces (e.g., Kalman filtering) or estimation of static functions (e.g., Gaussian process regression). This paper investigates estimation problems for a class of evolving function models that relaxes both assumptions simultaneously.

The considered class of evolving function models is based on integro-difference equations (IDE), which model continuous functions that evolve with discrete-time dynamics. IDEs are spatio-temporal models with both a time dimension and a space dimension~\cite{Cressie2011,Atluri2019a}. IDE models are currently employed in population evolution modeling~\cite{Lutscher2018}, geographical processes~\cite{Wikle1999,Rozier2007}, and can be connected to the estimation of finite-dimensional time-discretized partial differential equations~\cite[Chapter 6]{Cressie2011}. To estimate the state of such systems, two natural starting points are the Kalman filter (KF)~\cite{Kalman1960}, the quintessential estimator for finite-dimensional linear dynamical systems, and Gaussian process (GP) regression~\cite{Rasmussen2006}, a widely used non-parametric method for estimating static functions. The KF and GP share many similarities, including the assumption of Gaussian noise and the use of Bayesian inference~\cite{Deisenroth2011b,Reece2010a}. In fact, the update step of the KF can be viewed as a GP evaluated on a finite set of points~\cite{Kuper2020}, and conversely, the computational complexity of GP regression can be reduced by incorporating KF logic, both for temporal data~\cite{Hartikainen2010,Corenflos2021} and for spatio-temporal settings~\cite{Sarkka2012,Todescato2020}. This connection has been exploited in the Kriged Kalman Filter (KKF)~\cite{Mardia1998}, which decomposes a spatio-temporal field into basis function coefficients with linear dynamics and a spatially correlated residual estimated via kriging.

In our previous work~\cite{VanHulst2023}, we unified the KF and the GP under a single estimation framework for IDEs, which we coined the Dynamic Gaussian Process (DGP). The key insight is that a GP remains a GP after both IDE evolution and conditioning on observations, so the DGP posterior admits closed-form mean and covariance updates. When the kernels have a separable structure, the infinite-dimensional estimation problem reduces to a finite-dimensional Kalman filter on basis function coefficients. This approximate estimation framework generalizes the results in~\cite{Wikle1999} and shares structural similarities with the KKF; these connections are discussed in Section~\ref{dgp_sec:analysis}.

This paper extends that work in two directions. First, we provide a stability and approximation error analysis for the basis function approximation. The analysis shows that the finite-dimensional transition matrix inherits the spectral properties of the integral operator, that the functional estimation error decomposes exactly into in-subspace and out-of-subspace contributions, and that all approximation errors vanish as $M$ grows. Second, we generalize the DGP to vector-valued states, which enables the treatment of higher-order PDEs through state augmentation.

The rest of this paper is organized as follows. Section~\ref{dgp_sec:problem} formulates the IDE model class, its connection to PDEs, and provides relevant background on the KF and the GP. Section~\ref{dgp_sec:method} presents the exact DGP solution. Section~\ref{dgp_sec:analysis} shows how separable kernel structure reduces the DGP to a finite-dimensional Kalman filter and discusses connections to existing frameworks. Section~\ref{dgp_sec:stability} presents the stability and approximation error analysis, which constitutes the main new contribution beyond~\cite{VanHulst2023}. Section~\ref{dgp_sec:results} presents numerical examples for the heat and wave equations, and Section~\ref{dgp_sec:conclusions} gives conclusions.

\noindent \textbf{Notation.}~Let $\mathbb{R} =(-\infty,\infty)$, and $\mathbb{R}_{\geq0} = [0,\infty)$. Let $\mathbb{N} = \{0,1,\ldots\}$ denote the natural numbers. The identity matrix of size $n$ is denoted by $I_{n}$. The Kronecker product is denoted $\otimes$. Let $\mathbb{S}^n_+:=\{A\in\mathbb{R}^{n\times n} | A \succeq 0 \}$ denote the set of symmetric positive semidefinite matrices of size $n$. A normal distribution with mean vector $\mu \in \mathbb{R}^n$ and covariance matrix $\Sigma \in \mathbb{S}^n_+$ is denoted $\mathcal{N}(\mu,\Sigma)$. A Gaussian process with mean function $\bar{f}(x)$ and covariance function $k(x,x')$ is denoted $\mathcal{GP}(\bar{f}(x),k(x,x'))$.

Given a function $f: \mathcal{X} \to \mathbb{R}$ and a vector $X = [x_1, x_2, \ldots, x_n]^\top \in \mathcal{X}^n$, we use the following compact notation:
\begin{equation*}
    \mathbf{f}(X) := [f(x_1), f(x_2), \ldots, f(x_n)]^\top \in \mathbb{R}^n.
\end{equation*}
Similarly, for a kernel function $k: \mathcal{X} \times \mathcal{X} \to \mathbb{R}$ and vectors $X \in \mathcal{X}^n$, $X' \in \mathcal{X}^{m}$, the kernel matrix $\mathbf{k}(X,X') \in \mathbb{R}^{n \times m}$ is defined entrywise by
\begin{equation*}
    \mathbf{k}(X,X')_{ij} := k(x_i,x'_j),
\end{equation*}
for $1\leq i \leq n$ and $1 \leq j \leq m$. Throughout, boldface denotes evaluation of a function or kernel at a vector of spatial points. When $f: \mathcal{X} \to \mathbb{R}^D$ is vector-valued with components $f^{(1)}, \ldots, f^{(D)}$, the boldface convention extends by stacking output dimensions,
\begin{equation*}
    \mathbf{f}(X) := [\mathbf{f}^{(1)\top}(X), \ldots, \mathbf{f}^{(D)\top}(X)]^\top \in \mathbb{R}^{Dn},
\end{equation*}
and similarly for matrix-valued kernels $K: \mathcal{X} \times \mathcal{X} \to \mathbb{R}^{D \times D}$, where $\mathbf{K}(X,X') \in \mathbb{R}^{Dn \times Dm}$ follows the same dimension-first ordering.

\section{Problem Formulation and Preliminaries}
\label{dgp_sec:problem}
This section introduces the integro-difference equation model class and its connection to partial differential equations, followed by the observation model and the relevant background on Gaussian process regression and Kalman filtering.

\subsection{Integro-Difference Equations}
\label{dgp_subsec:problem}
Consider a function $f_t: \mathcal{X} \to \mathbb{R}^D$ defined on a spatial domain $\mathcal{X}$ that evolves in discrete time according to an integro-difference equation (IDE),
\begin{equation}
\label{dgp_eq:dynamic_function_system}
	f_{t+1}(x) = \int_\mathcal{X} K_f(x,s)\, f_{t}(s)\, d\nu(s) + v_{t}(x),
\end{equation}
with $x \in \mathcal{X}$, $t \in \mathbb{N}$, state function $f_t: \mathcal{X} \to \mathbb{R}^D$, matrix-valued evolution kernel $K_f: \mathcal{X} \times \mathcal{X} \to \mathbb{R}^{D \times D}$, and stochastic disturbance $v_t(x)$. Here, $\nu$ is a $\sigma$-finite reference measure on $\mathcal{X}$, typically the Lebesgue measure for spatially continuous systems or the counting measure when $\mathcal{X}$ is a finite set. IDEs of this form model continuous functions that evolve with discrete-time dynamics. They are spatio-temporal models with many practical applications\cite{Wikle1999,Rozier2007,Lutscher2018,Cressie2011,Atluri2019a}. For $D=1$, the kernel $K_f$ reduces to a scalar function $k_f: \mathcal{X} \times \mathcal{X} \to \mathbb{R}$ and~\eqref{dgp_eq:dynamic_function_system} is the standard scalar IDE.

Let $\mathcal{K}$ denote the integral operator associated with $K_f$, defined by
\begin{equation}
\label{dgp_eq:operator_def}
	(\mathcal{K} f)(x) := \int_\mathcal{X} K_f(x,s)\, f(s)\, d\nu(s),
\end{equation}
so that~\eqref{dgp_eq:dynamic_function_system} can be written compactly as $f_{t+1} = \mathcal{K} f_t + v_t$. We assume that $f_0(x) \sim \mathcal{GP}(\bar{f}_0(x), Q_f(x,x'))$ in which $\bar{f}_0: \mathcal{X} \to \mathbb{R}^D$ and $Q_f: \mathcal{X} \times \mathcal{X} \to \mathbb{R}^{D \times D}$ are the mean function and positive semidefinite matrix-valued covariance function of the initial condition, respectively. The disturbances are $v_t(x) \sim \mathcal{GP}(0,Q_v(x,x'))$, $t \in \mathbb{N}$, with $Q_v: \mathcal{X} \times \mathcal{X} \to \mathbb{R}^{D \times D}$ positive semidefinite and independent of $f_t$.

A natural source of IDE models is the time-discretization of partial differential equations~\cite[Chapter~6]{Cressie2011}. Consider a linear PDE of temporal order $r \geq 1$ on a domain $\mathcal{X} \subseteq \mathbb{R}^d$, subject to boundary conditions on $\partial\mathcal{X}$ such that the initial-boundary value problem is well-posed:
\begin{equation}
\label{dgp_eq:general_pde}
    \frac{\partial^r \phi}{\partial \tau^r}(x,\tau) = \mathcal{L}\, \phi(x,\tau),
\end{equation}
where $\phi: \mathcal{X} \times \mathbb{R}_{\geq 0} \to \mathbb{R}$, $\tau \in \mathbb{R}_{\geq 0}$ is the continuous-time variable, and $\mathcal{L}$ is a linear spatial differential operator. The solution of~\eqref{dgp_eq:general_pde} depends on $r$ initial conditions $\phi(x,0), \partial_\tau\phi(x,0), \ldots, \partial_\tau^{r-1}\phi(x,0)$. Collecting these into a vector-valued state $f(x,\tau) \in \mathbb{R}^D$ with $D \leq r$ through the standard state-space reduction, the time evolution from $0$ to $\tau$ is described by a bounded linear solution operator $T(\tau): L_2(\mathcal{X};\mathbb{R}^D) \to L_2(\mathcal{X};\mathbb{R}^D)$~\cite[Chapter~2]{Pazy1996}:
\begin{equation}
\label{dgp_eq:solution_operator}
    f(\cdot,\tau) = T(\tau)\, f(\cdot,0).
\end{equation}
If the solution operator $T(\tau)$ admits an integral kernel $\mathcal{S}(x,s,\tau) \in \mathbb{R}^{D \times D}$, then~\eqref{dgp_eq:solution_operator} can be written as
\begin{equation}
\label{dgp_eq:general_solution}
    f(x,\tau) = \int_\mathcal{X} \mathcal{S}(x,s,\tau)\, f(s,0)\, ds.
\end{equation}
The kernel $\mathcal{S}$ is the Green's function of the PDE, with the matrix-valued structure arising from the state augmentation when $r > 1$. Existence and regularity of Green's functions for linear PDEs, including the parabolic and hyperbolic classes considered in this paper, is classical~\cite[Chapters~2,~7]{Evans2010}. Section~\ref{dgp_sec:results} derives the kernels in closed form for the heat equation ($D=1$) and the wave equation ($D=2$).

To transition from this continuous dynamic to a discrete-time filtering formulation, we fix a sampling interval $\Delta > 0$. The semigroup property $T(\tau + \Delta) = T(\Delta)\,T(\tau)$ implies $f(x, \tau + \Delta) = \int_\mathcal{X} \mathcal{S}(x,s,\Delta)\,f(s,\tau)\,ds$ for any $\tau \geq 0$. Writing $f_t(x) := f(x,t\Delta)$ for $t \in \mathbb{N}$ then gives
\begin{equation}
\label{dgp_eq:pde_to_ide}
    f_{t+1}(x) = \int_\mathcal{X} \mathcal{S}(x,s,\Delta)\, f_t(s)\, ds,
\end{equation}
which is an instance of~\eqref{dgp_eq:dynamic_function_system} with $K_f(x,s) = \mathcal{S}(x,s,\Delta)$.

\begin{remark}
\label{dgp_rem:pde_consistency}
The IDE~\eqref{dgp_eq:pde_to_ide} is a consistent time discretization of the PDE~\eqref{dgp_eq:general_pde}: setting $v_t = 0$ and letting $\Delta \to 0$,
\begin{equation}
\label{dgp_eq:pde_consistency}
    \frac{(\mathcal{K} f)(x) - f(x)}{\Delta} \to (\mathcal{L} f)(x)
\end{equation}
for all $f$ in the domain of $\mathcal{L}$, recovering the original PDE dynamics. This follows from standard semigroup theory: the solution operators $\{T(\tau)\}_{\tau \geq 0}$ form a strongly continuous ($C_0$-)semigroup whose infinitesimal generator is the spatial differential operator~\cite[Chapter~1]{Pazy1996}. For higher-order PDEs ($r > 1$), the state augmentation produces a first-order system on $L_2(\mathcal{X};\mathbb{R}^D)$, to which the same semigroup argument applies with the generator acting on the product space.
\end{remark}

The PDE connection above provides one natural source of IDE models, but the formulation in~\eqref{dgp_eq:dynamic_function_system} is more general. The evolution kernel $K_f$ can also be specified directly. While the results in this paper hold for general kernel functions $K_f$, $Q_f$, $Q_v$, they will lead to particularly efficient implementations when the kernels take a special separable form. A matrix-valued kernel $K(x,x') \in \mathbb{R}^{D \times D}$ is called separable if each of its entries satisfies
\begin{equation}
\label{dgp_eq:separable_def}
    K_{ij}(x,x')= U^\top(x)\Lambda_{ij} U(x'), \quad i,j \in \{1,\ldots,D\},
\end{equation}
for a shared vector of basis functions
\begin{equation*}
    U(x) := [u_1(x), \ldots, u_M(x)]^\top \in \mathbb{R}^M
\end{equation*}
and coefficient matrices $\Lambda_{ij} \in \mathbb{R}^{M\times M}$. Equivalently, defining $\Lambda := [\Lambda_{ij}]_{i,j=1}^D \in \mathbb{R}^{DM \times DM}$ and writing $\check{U}(x) := I_D \otimes U(x) \in \mathbb{R}^{DM \times D}$ gives the compact form $K(x,x') = \check{U}^\top(x)\, \Lambda\, \check{U}(x')$. For $D=1$, $\check{U}(x) = U(x)$ and this reduces to $k(x,x') = U^\top(x)\Lambda\, U(x')$. The benefits of this structure are further examined in Section~\ref{dgp_sec:analysis}.

\subsection{Observations and Estimation Problem}
\label{dgp_subsec:observations}
At each time step $t$, we obtain $p$ scalar observations $Y_t \in \mathbb{R}^p$ of the function $f_t$ at spatial locations $X_t \in \mathcal{X}^p$, according to
\begin{equation}
\label{dgp_eq:function_observation_model}
	Y_t = \Phi_t\, \mathbf{f}_{t}(X_t) + \mathbf{w}_{t}(X_t),
\end{equation}
where $\mathbf{f}_t(X_t) \in \mathbb{R}^{Dp}$ stacks the state evaluations by output dimension, and $\Phi_t \in \mathbb{R}^{p \times Dp}$ is a known observation matrix that selects which state components are measured. The measurement noise is $w_t(x) \sim \mathcal{GP}(0,Q_w(x,x')),~ t \in \mathbb{N}$, with positive semidefinite $Q_w: \mathcal{X} \times \mathcal{X} \to \mathbb{R}$. For $D=1$, the observation matrix reduces to $\Phi_t = I_p$ and~\eqref{dgp_eq:function_observation_model} recovers the standard observation model $Y_t = \mathbf{f}_t(X_t) + \mathbf{w}_t(X_t)$. For $D>1$, the matrix $\Phi_t$ enables partial state observation; for instance, if only the first of two state components is measured, then $\Phi_t = [I_p ~~ 0_p] \in \mathbb{R}^{p \times 2p}$.

The problem considered in this paper is the estimation of the function $f_N$ using the data set $\{X_{0:N}, Y_{0:N}\}$ in which $X_{0:N} := [X_0, X_1, \ldots, X_N]^\top \in \mathcal{X}^{(N+1) \times p}$ and corresponding $Y_{0:N} := [Y_0, Y_1, \ldots, Y_N]^\top \in \mathbb{R}^{(N+1) \times p}$ as in~\eqref{dgp_eq:function_observation_model}, and $N \in \mathbb{N}$ is arbitrary. Besides enabling the estimation of time-discretized PDEs of the form~\eqref{dgp_eq:general_pde}, this problem can be motivated from two additional perspectives:
\begin{enumerate}
    \item as an extension of GP-based estimation to the case where the Gaussian process evolves in time;
    \item as an extension of Kalman filtering to infinite-dimensional systems described by integro-difference models.
\end{enumerate}
These connections are addressed in the next subsections, which also provide preliminaries for the remainder of the paper.

\subsection{Gaussian Process Regression}
\label{dgp_subsec:GP}
A Gaussian process (GP) is a distribution over the space of functions, fully specified by a prior mean function $\bar{f}(x) = \mathbb{E}[f(x)]$ and a covariance (kernel) function $k(x,x') = \text{cov}(f(x),f(x'))$~\cite{Rasmussen2006}. Given noisy observations $y_t = f(x_t) + w_t$ with $w_t \sim \mathcal{N}(0,\sigma^2)$ at inputs $X = \{x_0, \ldots, x_N\}$, the posterior distribution is again Gaussian:
\begin{equation}
\label{dgp_eq:GP_posterior}
\begin{aligned}
	{\hat{f}}(x) &= \bar{f}(x) + L(x,X)(Y-\mathbf{\bar{f}}(X)), \\
	{\hat{c}}(x,x') &= {k}(x,x') - L(x,X) \mathbf{k}(X,x'),
\end{aligned}
\end{equation}
where $L(x,X) := \mathbf{k}(x,X) \left[\mathbf{k}(X,X)+\sigma^2 I_N\right]^{-1}$. The computation of the posterior scales with $O(N^3)$ due to the matrix inverse, which can be reduced through sparse methods~\cite{Lee2020}, recursive methods~\cite{Huber2014}, or basis function approximations~\cite{Cressie2008}.

The standard GP framework is a special case of~\eqref{dgp_eq:dynamic_function_system} when
\begin{equation*}
    f_{t+1}(x) = f_{t}(x)
\end{equation*}
with $f_{0}(x) = f(x)$. These dynamics are obtained by choosing $D=1$, $\mathcal{K} = \mathcal{I}$ (the identity operator), and $Q_v(x,x') = 0$. The estimation problem in Section~\ref{dgp_subsec:observations} can therefore be seen as an extension of the GP framework to the case where the unknown function $f$ evolves according to~\eqref{dgp_eq:dynamic_function_system}.

\subsection{Kalman Filtering}
\label{dgp_subsec:KF}
The Kalman filter is a well-known algorithm to estimate the state of a dynamical system described by
\begin{align}
\label{dgp_eq:dynamic_state_system}
	x_{t+1} &= A x_t + v_t, \\
\label{dgp_eq:state_observation_model}
	y_{t} &= C_t x_t + w_t,
\end{align}
where $x_t \in \mathbb{R}^{n}$ the state to be estimated, $y_t \in \mathbb{R}^{p}$ the system output, $v_t \sim \mathcal{N}(0,V)$, $t \in \mathbb{N}$  the disturbance on the state with covariance matrix $V \in \mathbb{S}^n_+$, and $w_t \sim \mathcal{N}(0,W_t),~ t \in \mathbb{N},$ the measurement noise with covariance matrix $W_t \in \mathbb{S}^p_+$, with $x_0 \sim \mathcal{N}(\bar{x}_0,\bar{S})$. The output matrix $C_t$ and the noise covariance $W_t$ are chosen time-varying because in the IDE setting of Section~\ref{dgp_subsec:problem}, the observation locations $X_t$ may differ at each time step.

Important for what follows is to detail the two steps that together constitute the Kalman filter, called the update and the prediction steps. The equations for the update step are given by
\begin{equation}
\label{dgp_eq:Kalman_filter_update}
\begin{aligned}
	\hat{x}_{t|t}&=\hat{x}_{t|t-1} + L_{t} (y_t-C_t \hat{x}_{t|t-1}), \\
	S_{t|t} &= S_{t|t-1} - L_{t} C_t S_{t|t-1},
\end{aligned}
\end{equation}
with $L_{t} = S_{t|t-1} C_t^\top(C_t S_{t|t-1}C_t^\top+W_t)^{-1}$ the Kalman gain, $\hat{x}_{0|-1}=\bar{x}_0$ and $S_{0|-1}=\bar{S}$. Here, $\hat{x}_{t|l}$ denotes the estimate of $x_t$ at time step $l$, and $S_{t|l}$ denotes the estimate of the state covariance $\text{cov}(x_t,x_t)$ at time step $l$. The equations for the prediction step in the KF are given by
\begin{equation}
\label{dgp_eq:Kalman_filter_prediction}
\begin{aligned}
	\hat{x}_{t+1|t} &= A \hat{x}_{t|t}, \\
	S_{t+1|t} &= A S_{t|t} A^\top + V.
\end{aligned}
\end{equation}

Note that the linear and quadratic forms of, respectively, the mean estimate and the covariance estimate are closed under the operation of both the update and the prediction steps. Furthermore, these steps rely only on the previous estimates and the current data. Hence, the Kalman filter can be implemented recursively. Under the presented assumptions of linear dynamics, and zero-mean Gaussian noise and disturbances, the Kalman filter is the optimal estimator for this problem~\cite{Kalman1960}.

The standard Kalman filter can be seen as a special case of the estimation problem considered in this paper. Specifically, let $\mathcal{X} = \{x_1, x_2, \ldots, x_m\}$ be a finite set equipped with the counting measure, and let the state to be estimated be $\mathbf{f}_t(X) \in \mathbb{R}^{Dm}$. Under the counting measure, the integral in~\eqref{dgp_eq:dynamic_function_system} reduces to a finite sum, and the system becomes a standard linear state-space model with transition matrix $A \in \mathbb{R}^{Dm \times Dm}$ whose entries are determined by the kernel evaluations $K_f(x_i, x_j) \in \mathbb{R}^{D \times D}$.

The next section details the extension of the Kalman filter to infinite-dimensional systems modeled by IDEs, which presents the solution to the problem posed in Section~\ref{dgp_subsec:problem}.

\section{Dynamic Gaussian Processes}
\label{dgp_sec:method}
Here, we consider the estimation problem of the system~\eqref{dgp_eq:dynamic_function_system}, which we refer to as the dynamic Gaussian process (DGP). The key results are two theorems showing that if the initial condition of the IDE in~\eqref{dgp_eq:dynamic_function_system} is a GP, it remains a GP after evolving and after incorporating observations.
\begin{theorem}
\label{dgp_thm:gp_ide_prediction}
    Suppose that $f_0(x) \sim \mathcal{GP}(\bar{f}_0(x),Q_f(x,x'))$ with $\bar{f}_0: \mathcal{X} \to \mathbb{R}^D$ and $Q_f: \mathcal{X} \times \mathcal{X} \to \mathbb{R}^{D \times D}$, and that $v_0(x) \sim \mathcal{GP}(0,Q_v(x,x'))$ with $Q_v: \mathcal{X} \times \mathcal{X} \to \mathbb{R}^{D \times D}$, independent of $f_0$. Consider
    \begin{equation}
        f_1(x) = \int_\mathcal{X} K_f(x,s) f_0(s)\, d\nu(s) + v_0(x),
    \end{equation}
    with $K_f: \mathcal{X} \times \mathcal{X} \to \mathbb{R}^{D \times D}$. Then $f_1(x)$ is a GP with mean $\bar{f}_1: \mathcal{X} \to \mathbb{R}^D$ and covariance $Q_1: \mathcal{X} \times \mathcal{X} \to \mathbb{R}^{D \times D}$ given by
    \begin{align}
    \label{dgp_eq:thm1_mean}
        \bar{f}_1(x) &= \int_\mathcal{X} K_f(x,s)\bar{f}_0(s)\, d\nu(s),\\
    \label{dgp_eq:thm1_cov}
        Q_1(x,x') &= \int_\mathcal{X}\!\int_\mathcal{X} K_f(x,s)\,Q_f(s,s')\,K_f^\top(x',s')d\nu(s)\, d\nu(s')
        + Q_v(x,x').
    \end{align}
\end{theorem}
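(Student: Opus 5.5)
The plan is to use the defining property of a Gaussian process: every finite collection of evaluations is jointly Gaussian. It therefore suffices to show two things. First, for any finite set of points $x_1,\ldots,x_n \in \mathcal{X}$, the stacked vector $\mathbf{f}_1(X)$ is jointly Gaussian. Second, its mean and cross-covariances are given by \eqref{dgp_eq:thm1_mean}--\eqref{dgp_eq:thm1_cov}. Write $f_1 = g + v_0$ with $g(x) := \int_\mathcal{X} K_f(x,s) f_0(s)\, d\nu(s)$. Because $v_0$ is a GP independent of $f_0$, and hence of $g$, the problem reduces to showing that $g$ is a GP with the claimed mean and covariance. The sum of independent jointly Gaussian vectors is then jointly Gaussian, with means and covariances adding; this produces the $+Q_v(x,x')$ term.

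\textbf{Gaussianity of $g$.} I will treat $g(x)$ as a limit of linear functionals of $f_0$. Throughout I assume enough integrability, for instance $\int\!\int \|K_f(x,s)\|\,\|K_f(x,s')\|\,(\|Q_f(s,s')\| + \|\bar f_0(s)\|\|\bar f_0(s')\|)\, d\nu(s)\, d\nu(s') < \infty$, so that the integral exists in the mean-square sense and Fubini applies. The approximation proceeds as follows.
\begin{enumerate}
\item Approximate $g(x_i)$ by Riemann-type (or simple-function) sums $\sum_k K_f(x_i,s_k) f_0(s_k)\,\nu(A_k)$.
\item Each such sum is a finite linear combination of evaluations of $f_0$. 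Jointly over $i=1,\ldots,n$ these sums form a Gaussian vector, since finite linear maps of Gaussian vectors are Gaussian.
\item The sums converge in $L_2(\Omega)$ to $g(x_i)$.
\item $L_2$-limits of jointly Gaussian vectors are jointly Gaussian, because their characteristic functions converge and the limiting mean and covariance are the limits of the means and covariances.
\end{enumerate}
Hence $\mathbf{g}(X)$ is Gaussian.

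\textbf{Moments of $g$.} I will interchange expectation and integration using Fubini. For the mean,
$\mathbb{E}[g(x)] = \int K_f(x,s)\,\mathbb{E}[f_0(s)]\,d\nu(s)$,
which gives \eqref{dgp_eq:thm1_mean}. For the covariance, write $\tilde f_0 = f_0 - \bar f_0$ and compute
$\mathbb{E}[(g(x)-\bar f_1(x))(g(x')-\bar f_1(x'))^\top] = \int\!\int K_f(x,s)\,\mathbb{E}[\tilde f_0(s)\tilde f_0(s')^\top]\,K_f(x',s')^\top\, d\nu(s)\, d\nu(s')$.
Since $\mathbb{E}[\tilde f_0(s)\tilde f_0(s')^\top] = Q_f(s,s')$, this yields the double integral in \eqref{dgp_eq:thm1_cov}. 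The transpose is placed on $K_f(x',s')$ because the factor is the transpose of $K_f(x',s')\tilde f_0(s')$.

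\textbf{Main obstacle.} The only delicate step is the measure-theoretic justification: that the stochastic integral is well-defined and that the interchange of expectation with the $\nu$-integrals is legitimate. I would handle this by requiring joint measurability of $f_0$ and the integrability bound above. Under that bound, Tonelli gives $\mathbb{E}\int \|K_f(x,s) f_0(s)\|\,d\nu(s) < \infty$, and also the square-integrability needed for the $L_2(\Omega)$ convergence in step 3. For the counting measure, the argument reduces to finite sums, and no limit is needed.
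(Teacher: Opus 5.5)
Your argument has the same structure as the paper's proof. The mean follows from linearity of expectation. The covariance follows from expanding the centered product, using independence of $f_0$ and $v_0$ to kill the cross terms, and interchanging expectation with the $\nu$-integrals. Gaussianity holds because the finite-dimensional marginals are linear images of the Gaussian family $\{f_0(s)\}$. Your finite-sum approximation makes explicit the limit that the paper skips when it calls $f_1(x_1),\ldots,f_1(x_n)$ a linear transformation of $\{f_0(s):s\in\mathcal{X}\}$. Your moment computations, including the placement of the transpose, are right.

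The one step that fails as written is your justification of the part you flag as delicate; the paper itself interchanges expectation and integration without stating any conditions. Your double-integral bound is weighted by $\|Q_f(s,s')\|$. Tonelli on $\Omega\times\mathcal{X}$, however, needs control of $\mathbb{E}\|f_0(s)\|$, and for the covariance of $\mathbb{E}[\|f_0(s)\|\,\|f_0(s')\|]$. The off-diagonal values $\|Q_f(s,s')\|$ do not provide this, so your bound does not give $\mathbb{E}\int_\mathcal{X}\|K_f(x,s)f_0(s)\|\,d\nu(s)<\infty$. For example, take $\mathcal{X}=[1,\infty)$ with Lebesgue measure, $D=1$, $K_f\equiv 1$, $\bar f_0=0$, and $f_0(s)=\xi_n/n$ for $s\in[n,n+1)$ with i.i.d.\ $\xi_n\sim\mathcal{N}(0,1)$. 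Your bound equals $\sum_n n^{-2}<\infty$, yet $\int_1^\infty|f_0(s)|\,ds=\sum_n|\xi_n|/n=\infty$ almost surely, so the integral defining $f_1$ does not even exist pathwise. The hypothesis that works is
\begin{equation*}
\int_\mathcal{X}\|K_f(x,s)\|\bigl(\operatorname{tr}Q_f(s,s)+\|\bar f_0(s)\|^2\bigr)^{1/2}\,d\nu(s)<\infty\quad\text{for every } x\in\mathcal{X}.
\end{equation*}
With $\tilde f_0=f_0-\bar f_0$, this condition does the following:
\begin{itemize}
\item It gives Tonelli for the mean, and for the covariance via $\mathbb{E}[\|\tilde f_0(s)\|\,\|\tilde f_0(s')\|]\le(\operatorname{tr}Q_f(s,s)\operatorname{tr}Q_f(s',s'))^{1/2}$.
\item It implies your double-integral bound, because $\|Q_f(s,s')\|\le\|Q_f(s,s)\|^{1/2}\|Q_f(s',s')\|^{1/2}$.
\item Together with your joint-measurability assumption, it makes $s\mapsto K_f(x,s)\tilde f_0(s)$ Bochner integrable in $L_2(\Omega;\mathbb{R}^D)$. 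Hence every component of $\mathbf{g}(X)-\mathbf{\bar f}_1(X)$ lies in the closed $L_2(\Omega)$-span of the centered components of $f_0$, which is a space of jointly Gaussian variables.
\end{itemize}
This closed-span argument is also safer than tagged Riemann sums. Those need not converge for arbitrary tags without regularity of $K_f(x,\cdot)$ and $f_0$; for instance, take $K_f(x,s)=\mathbf{1}_{\mathbb{Q}}(s)$ on $[0,1]$ with rational tags.
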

\begin{proof}
    The mean of $f_1(x)$ follows from linearity of expectation and the integral:
    \begin{equation*}
    \begin{aligned}
        \mathbb{E}[f_1(x)] &= \int_\mathcal{X} K_f(x,s)\,\mathbb{E}[f_0(s)]\,d\nu(s) + \mathbb{E}[v_0(x)] \\
        &= \int_\mathcal{X} K_f(x,s)\,\bar{f}_0(s)\,d\nu(s).
    \end{aligned}
    \end{equation*}
    For the covariance, write the centered variable as
    \begin{equation*}
        f_1(x) - \bar{f}_1(x) = \int_\mathcal{X} K_f(x,s)\bigl(f_0(s) - \bar{f}_0(s)\bigr)\,d\nu(s) + v_0(x).
    \end{equation*}
    Expanding $\mathbb{E}\bigl[(f_1(x) - \bar{f}_1(x))(f_1(x') - \bar{f}_1(x'))^\top\bigr]$ produces three terms: the integral--integral product, the noise--noise product $\mathbb{E}[v_0(x)\,v_0^\top(x')] = Q_v(x,x')$, and two cross terms between the integral and $v_0$. Because $f_0$ and $v_0$ are independent, the cross terms vanish. For the integral--integral product, $\mathbb{E}\bigl[(f_0(s)-\bar{f}_0(s))(f_0(s')-\bar{f}_0(s'))^\top\bigr] = Q_f(s,s')$ by definition of the covariance kernel, so exchanging expectation and integration gives
    \begin{equation*}
        \int_\mathcal{X}\!\int_\mathcal{X} K_f(x,s)\,Q_f(s,s')\,K_f^\top(x',s')\,d\nu(s)\,d\nu(s'),
    \end{equation*}
    which is the first term of~\eqref{dgp_eq:thm1_cov}.
    To show that $f_1$ is a GP, note that any finite collection of evaluations $f_1(x_1), \ldots, f_1(x_n)$ is a linear transformation of the jointly Gaussian variables $\{f_0(s): s \in \mathcal{X}\}$ and $\{v_0(x_i)\}_{i=1}^n$, and is therefore jointly Gaussian.
\end{proof}
With the prediction step established, the remaining ingredient is conditioning the GP on observations.

\begin{theorem}
\label{dgp_thm:gp_conditioning}
    Suppose that $f_0(x) \sim \mathcal{GP}(\bar{f}_0(x),Q_f(x,x'))$ with $\bar{f}_0: \mathcal{X} \to \mathbb{R}^D$ and $Q_f: \mathcal{X} \times \mathcal{X} \to \mathbb{R}^{D \times D}$, and that $w_0(x) \sim \mathcal{GP}(0,Q_w(x,x'))$ is independent of $f_0$. Consider the observations
    \begin{equation}
        Y_0 = \Phi_0\,\mathbf{f}_0(X_0) + \mathbf{w}_0(X_0),
    \end{equation}
    with $\Phi_0 \in \mathbb{R}^{p \times Dp}$ the observation matrix. Conditioning the GP $f_0$ on $Y_0$ yields another GP with mean
    \begin{equation}
    \label{dgp_eq:thm2_mean}
        \bar{f}_0(x) + \mathbf{Q_f}(x,X_0) \Phi_0^\top L_0 (Y_0 - \Phi_0\,\mathbf{\bar{f}}_0(X_0)),
    \end{equation}
    and covariance
    \begin{equation}
    \label{dgp_eq:thm2_cov}
        Q_f(x,x') - \mathbf{Q_f}(x,X_0) \Phi_0^\top L_0\, \Phi_0\, \mathbf{Q_f}(X_0,x'),
    \end{equation}
    in which $L_0 := \left[\Phi_0\,\mathbf{Q_f}(X_0,X_0)\,\Phi_0^\top + \mathbf{Q_w}(X_0,X_0)\right]^{-1}$, and $\mathbf{Q_f}(x,X_0) \in \mathbb{R}^{D \times Dp}$ denotes the cross-covariance between $f_0(x)$ and the stacked evaluations $\mathbf{f}_0(X_0)$.
\end{theorem}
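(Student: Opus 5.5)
The plan is to reduce the statement to the classical conditioning formula for jointly Gaussian random vectors, applied to each finite collection of evaluations of $f_0$ together with $Y_0$.

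Fix an arbitrary finite set of test points $x_1,\ldots,x_n\in\mathcal{X}$ and stack $F := [f_0^\top(x_1),\ldots,f_0^\top(x_n)]^\top \in \mathbb{R}^{Dn}$. Since $f_0$ is a GP and $w_0$ is an independent GP, the collection $\{f_0(x_i)\}_{i=1}^n \cup \{\mathbf{f}_0(X_0)\} \cup \{\mathbf{w}_0(X_0)\}$ is jointly Gaussian. The vector $(F, Y_0)$ is a linear image of it, so $(F,Y_0)$ is jointly Gaussian, exactly as in the closing argument of Theorem~\ref{dgp_thm:gp_ide_prediction}.

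Next I compute the first two moments of $(F,Y_0)$. By linearity, $\mathbb{E}[Y_0] = \Phi_0\,\mathbf{\bar{f}}_0(X_0)$, because $w_0$ has zero mean. For the cross-covariance I expand $Y_0 - \mathbb{E}Y_0 = \Phi_0(\mathbf{f}_0(X_0)-\mathbf{\bar f}_0(X_0)) + \mathbf{w}_0(X_0)$ and use independence of $f_0$ and $w_0$. This gives $\operatorname{cov}(f_0(x),Y_0) = \mathbf{Q_f}(x,X_0)\Phi_0^\top$. Likewise $\operatorname{cov}(Y_0,Y_0) = \Phi_0\,\mathbf{Q_f}(X_0,X_0)\,\Phi_0^\top + \mathbf{Q_w}(X_0,X_0) = L_0^{-1}$. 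The only bookkeeping to check here is that the dimension-first stacking convention makes $\mathbf{Q_f}(x,X_0)\in\mathbb{R}^{D\times Dp}$ consistent with $\mathbf{f}_0(X_0)\in\mathbb{R}^{Dp}$, so that all the products are well defined.

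I then apply the standard Gaussian conditioning lemma. For jointly Gaussian $(a,b)$, the conditional law of $a$ given $b$ is Gaussian with mean $\mu_a + \Sigma_{ab}\Sigma_{bb}^{-1}(b-\mu_b)$ and covariance $\Sigma_{aa}-\Sigma_{ab}\Sigma_{bb}^{-1}\Sigma_{ba}$; this is the same computation underlying \eqref{dgp_eq:GP_posterior} and the Kalman update \eqref{dgp_eq:Kalman_filter_update}. Read off blockwise, this yields exactly \eqref{dgp_eq:thm2_mean} for the mean of each $f_0(x_i)$. It also yields \eqref{dgp_eq:thm2_cov} for the cross-covariance between $f_0(x_i)$ and $f_0(x_j)$, using $\Sigma_{ba} = \Phi_0\mathbf{Q_f}(X_0,x')$ by symmetry of the kernel.

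Finally, the conditional finite-dimensional distributions are all Gaussian and are generated by a single mean function and a single covariance function, so they are automatically consistent. Hence the conditioned process is a GP with the stated mean and covariance. The covariance function is positive semidefinite because each finite block is a conditional covariance matrix.

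The main obstacle is invertibility of $\Phi_0\mathbf{Q_f}(X_0,X_0)\Phi_0^\top+\mathbf{Q_w}(X_0,X_0)$, which the statement implicitly assumes by defining $L_0$ as an inverse. I would note that this holds whenever $\mathbf{Q_w}(X_0,X_0)\succ0$. In the singular case I would replace the inverse by the Moore--Penrose pseudoinverse, which still gives the correct conditional law for Gaussian vectors. Under the paper's definition I would simply state invertibility as a standing assumption.
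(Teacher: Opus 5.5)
Your proposal is correct and follows the paper's proof. Like the paper, you form the joint Gaussian law of $Y_0$ and evaluations of $f_0$, compute the blocks $\mathbf{Q_f}(x,X_0)\Phi_0^\top$ and $\Phi_0\mathbf{Q_f}(X_0,X_0)\Phi_0^\top+\mathbf{Q_w}(X_0,X_0)$, and apply the standard Gaussian conditioning formula. Your use of an arbitrary finite set of test points, where the paper uses a single point $x$, is a slight tightening that directly justifies the covariance formula in $(x,x')$ and the GP claim, but it is the same argument.
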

\begin{proof}
    Since $f_0$ is a GP and $w_0$ is an independent GP, the observations $Y_0 = \Phi_0\,\mathbf{f}_0(X_0) + \mathbf{w}_0(X_0)$ are an affine transformation of jointly Gaussian variables. The joint distribution of $Y_0$ and $f_0(x)$ at an arbitrary $x \in \mathcal{X}$ is therefore Gaussian:
    \begin{equation*}
    \begin{bmatrix} Y_0 \\ f_0(x) \end{bmatrix}
    \sim \mathcal{N}\!\left(
    \begin{bmatrix} \Phi_0\,\mathbf{\bar{f}}_0(X_0) \\ \bar{f}_0(x) \end{bmatrix},
    \begin{bmatrix}
        \Sigma_0 & \Phi_0\,\mathbf{Q_f}(X_0,x) \\
        \mathbf{Q_f}(x,X_0)\,\Phi_0^\top & Q_f(x,x)
    \end{bmatrix}
    \right),
    \end{equation*}
    with $\Sigma_0 := \Phi_0\,\mathbf{Q_f}(X_0,X_0)\,\Phi_0^\top + \mathbf{Q_w}(X_0,X_0)$ the covariance of $Y_0$. Applying the standard Gaussian conditioning formula $\mathbb{E}[a \mid b] = \mu_a + \Sigma_{ab}\Sigma_{bb}^{-1}(b - \mu_b)$, $\text{cov}(a \mid b) = \Sigma_{aa} - \Sigma_{ab}\Sigma_{bb}^{-1}\Sigma_{ba}$ to this joint distribution, with $L_0 := \Sigma_0^{-1}$, gives the stated posterior mean and covariance. Since $x$ is arbitrary, the conditional distribution defines a GP.
\end{proof}
Alternating the prediction step of Theorem~\ref{dgp_thm:gp_ide_prediction} and the update step of Theorem~\ref{dgp_thm:gp_conditioning} yields the DGP estimator.

\subsection{DGP Estimator}
We aim to estimate the evolving function $f_{t}(x) \in \mathbb{R}^D$ using the model~\eqref{dgp_eq:dynamic_function_system} and observation model~\eqref{dgp_eq:function_observation_model}. The estimate is a GP characterized by a mean function $\hat{f}_{t|l}: \mathcal{X} \to \mathbb{R}^D$ and a covariance function $\hat{c}_{t|l}: \mathcal{X} \times \mathcal{X} \to \mathbb{R}^{D \times D}$. The double subscripts indicate that we estimate $f_t$ using information up to time step $l$. Prior knowledge is embedded through the initial conditions $\hat{f}_{0|-1} = \bar{f}_0$ and $\hat{c}_{0|-1} = Q_f$. The update step, based on noisy measurements $Y_t$ at spatial locations $X_t$, is given by:
\begin{equation}
\label{dgp_eq:DGP_update}
\begin{aligned}
	{\hat{f}}_{t|t}(x) &= {\hat{f}}_{t|t-1}(x) + L_t(x,X_t) (Y_t-\Phi_t\,\mathbf{\hat{f}}_{t|t-1}(X_t)),\\
	 {\hat{c}}_{t|t}(x,x') &= {\hat{c}}_{t|t-1}(x,x') - L_t(x,X_t)\, \Phi_t\, \mathbf{\hat{c}}_{t|t-1}(X_t,x') ,
\end{aligned}
\end{equation}
in which
\begin{equation*}
\begin{aligned}
    L_t(x,X_t) 
    := \mathbf{\hat{c}}_{t|t-1}(x,X_t)\, \Phi_t^\top 
    \left[\Phi_t\,\mathbf{\hat{c}}_{t|t-1}(X_t,X_t)\, \Phi_t^\top + \mathbf{Q_w}(X_t,X_t) \right]^{-1}.
\end{aligned}
\end{equation*}

The prediction step is given by
\begin{equation}
\label{dgp_eq:DGP_prediction}
\begin{aligned}
	{\hat{f}}_{t+1|t}(x) &= \int_\mathcal{X} {K_f}(x,s)\,{\hat{f}}_{t|t}(s)\, d\nu(s) ,\\
	{\hat{c}}_{t+1|t}(x,x') &= \int_\mathcal{X}\!\int_\mathcal{X} {K_f}(x,s)\, \hat{c}_{t|t}(s,s')\, {K_f}^\top(x',s')d\nu(s)\, d\nu(s')
    + {Q_v}(x,x').
\end{aligned}
\end{equation}

While the DGP updates~\eqref{dgp_eq:DGP_update}--\eqref{dgp_eq:DGP_prediction} are exact, their direct implementation requires integrals over the full domain at every prediction step, and these integrals rarely admit analytical solutions. Moreover, the posterior mean and covariance expressions grow more complicated at each time step. The next section shows that this cost can be managed when the kernels have a separable structure.

\section{Separable Kernels}
\label{dgp_sec:analysis}
In this section, we show that when $K_f$, $Q_f$, and $Q_v$ are separable kernels we can perform the required computations of the previous section efficiently. We start with the DGP with truly separable kernels in Section~\ref{dgp_subsec:seperable_DGP}, and show in Section~\ref{dgp_subsec:approximate_DGP} that separable kernels can be used to approximate the problem from Section~\ref{dgp_subsec:problem}.
\subsection{DGP with Separable Kernels}
\label{dgp_subsec:seperable_DGP}
To illustrate the reduced computational cost that follows from the separability of the kernels $K_f$, $Q_f$, and $Q_v$, consider the following lemmas.
\begin{lemma}
\label{dgp_lem:seperable_kernels}
Suppose that
\begin{equation}
\label{dgp_eq:seperable_assumption}
    \begin{aligned}
        K_f(x,x') &= \check{U}^\top(x)\, \Lambda\, \check{U}(x'),\\
        Q_f(x,x') &= \check{U}^\top(x)\, \Lambda_f\, \check{U}(x'),\\
        Q_v(x,x') &= \check{U}^\top(x)\, \Lambda_v\, \check{U}(x')
    \end{aligned}
\end{equation}
with $\Lambda \in \mathbb{R}^{DM \times DM}$, $\Lambda_f, \Lambda_v \in \mathbb{S}_+^{DM}$ and
\begin{equation*}
    U(x) := [u_1(x), u_2(x), \ldots, u_M(x)]^\top
\end{equation*}
a vector of basis functions in which $u_i: \mathcal{X} \to \mathbb{R}$, for $i \in \{1,2,\ldots,M\}$. Define the transition matrix, observation matrix, and measurement noise covariance
\begin{equation}
\label{dgp_eq:KF_matrices}
\begin{aligned}
    A_M &:= \Lambda\,(I_D \otimes \Lambda_U) \in \mathbb{R}^{DM \times DM}, \\
    C_t &:= \Phi_t\,\check{\mathbf{U}}^\top(X_t) \in \mathbb{R}^{p \times DM}, \\
    W_t &:= \mathbf{Q_w}(X_t, X_t) \in \mathbb{R}^{p \times p},
\end{aligned}
\end{equation}
where $\check{\mathbf{U}}(X_t) := I_D \otimes \mathbf{U}(X_t) \in \mathbb{R}^{Dp \times DM}$ denotes the block-diagonal basis evaluation at the observation locations $X_t$ and $\Lambda_U := \int_\mathcal{X} U(x)U^\top(x)\, d\nu(x) \in \mathbb{R}^{M \times M}$ is the Gram matrix of the basis. Then the update step in~\eqref{dgp_eq:DGP_update} reduces to
\begin{equation}
\label{dgp_eq:reduced_DGP_covariance_update}
    {\hat{c}}_{t|t}(x,x') = \check{U}^\top(x)\, \Psi_{t|t}\, \check{U}(x')
\end{equation}
with
\begin{equation*}
    \Psi_{t|t} := \Psi_{t|t-1} - \Gamma_t\, C_t\, \Psi_{t|t-1}
\end{equation*}
in which $\Psi_{0|-1}=\Lambda_f$, and
\begin{equation*}
    \Gamma_t := \Psi_{t|t-1}\, C_t^\top \left[C_t\, \Psi_{t|t-1}\, C_t^\top + W_t\right]^{-1}.
\end{equation*}

Additionally, the prediction step in~\eqref{dgp_eq:DGP_prediction} is reduced to
\begin{equation}
\label{dgp_eq:reduced_DGP_covariance_prediction}
    {\hat{c}}_{t+1|t}(x,x') = \check{U}^\top(x)\,\Psi_{t+1|t}\,\check{U}(x')
\end{equation}
with
\begin{equation*}
    \Psi_{t+1|t} := A_M\, \Psi_{t|t}\, A_M^\top + \Lambda_v.
\end{equation*}
\end{lemma}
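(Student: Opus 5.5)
We argue by induction on $t$, with the invariant that every covariance produced by the DGP recursion~\eqref{dgp_eq:DGP_update}--\eqref{dgp_eq:DGP_prediction} has the separable form $\check{U}^\top(x)\,\Psi\,\check{U}(x')$ for some $\Psi \in \mathbb{S}^{DM}_+$. The base case is immediate: $\hat{c}_{0|-1} = Q_f = \check{U}^\top(x)\Lambda_f\check{U}(x')$, so $\Psi_{0|-1} = \Lambda_f$. The inductive step then splits into two computations. First, assuming $\hat{c}_{t|t-1}$ is separable with matrix $\Psi_{t|t-1}$, we show that $\hat{c}_{t|t}$ is separable with $\Psi_{t|t}$. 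Second, assuming $\hat{c}_{t|t}$ is separable with matrix $\Psi_{t|t}$, we show that $\hat{c}_{t+1|t}$ is separable with $\Psi_{t+1|t}$.

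\emph{Update step.} Evaluating $\hat{c}_{t|t-1}$ at the observation locations gives the stacked matrices. With the dimension-first ordering, the cross-covariance is
\begin{equation*}
\mathbf{\hat{c}}_{t|t-1}(x,X_t) = \check{U}^\top(x)\,\Psi_{t|t-1}\,\check{\mathbf{U}}^\top(X_t),
\end{equation*}
and the Gram block is
\begin{equation*}
\mathbf{\hat{c}}_{t|t-1}(X_t,X_t) = \check{\mathbf{U}}(X_t)\,\Psi_{t|t-1}\,\check{\mathbf{U}}^\top(X_t).
\end{equation*}
Multiplying by $\Phi_t$ and using the definition $C_t = \Phi_t\check{\mathbf{U}}(X_t)$ (up to the transpose convention in~\eqref{dgp_eq:KF_matrices}), the bracket in $L_t$ becomes $C_t\Psi_{t|t-1}C_t^\top + W_t$, and $L_t(x,X_t) = \check{U}^\top(x)\,\Gamma_t$. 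Substituting into~\eqref{dgp_eq:DGP_update}, the subtracted term equals $\check{U}^\top(x)\,\Gamma_t\,C_t\,\Psi_{t|t-1}\,\check{U}(x')$. Factoring out $\check{U}^\top(x)$ on the left and $\check{U}(x')$ on the right yields~\eqref{dgp_eq:reduced_DGP_covariance_update}.

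The main obstacle is index bookkeeping rather than any genuine difficulty. We must verify that $\check{\mathbf{U}}(X_t) = I_D\otimes\mathbf{U}(X_t)$, with $\mathbf{U}(X_t)\in\mathbb{R}^{p\times M}$ having rows $U^\top(x_i)$, is exactly the stacked evaluation of $\check{U}^\top$ at the points of $X_t$ under the dimension-first ordering. This follows from the Kronecker structure block by block, and it is what makes the $C_t$ of~\eqref{dgp_eq:KF_matrices} the correct observation matrix.

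\emph{Prediction step.} Substitute the separable forms of $K_f$ and $\hat{c}_{t|t}$ into~\eqref{dgp_eq:DGP_prediction}. We use $K_f^\top(x',s') = \check{U}^\top(s')\Lambda^\top\check{U}(x')$ and pull the constant matrices outside the integrals, which gives
\begin{equation*}
\check{U}^\top(x)\,\Lambda\left(\int_\mathcal{X}\check{U}(s)\check{U}^\top(s)\,d\nu(s)\right)\Psi_{t|t}\left(\int_\mathcal{X}\check{U}(s')\check{U}^\top(s')\,d\nu(s')\right)\Lambda^\top\check{U}(x').
\end{equation*}
By the mixed-product property, $\check{U}(s)\check{U}^\top(s) = I_D\otimes U(s)U^\top(s)$, so each integral equals $I_D\otimes\Lambda_U$. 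Because $\Lambda_U$ is symmetric, the right-hand factor is $(I_D\otimes\Lambda_U)\Lambda^\top = A_M^\top$. Adding $Q_v = \check{U}^\top(x)\Lambda_v\check{U}(x')$ gives~\eqref{dgp_eq:reduced_DGP_covariance_prediction}.

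Both the exchange of integrals and the finiteness of $\Lambda_U$ require $u_i\in L_2(\nu)$, which we take as implicit in the definition of $\Lambda_U$. Finally, the same substitutions applied to the mean recursions show that the means stay in $\operatorname{span}\check{U}^\top$ with coefficients following $A_M$. This confirms the Kalman-filter interpretation, though it is not strictly needed for the stated claim.
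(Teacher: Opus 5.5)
Your proof is correct and follows the paper's argument. Like the paper, you initialize with $\Psi_{0|-1}=\Lambda_f$, substitute the separable forms into the update and prediction steps, and use $\check{U}(s)\check{U}^\top(s) = I_D\otimes U(s)U^\top(s)$ to obtain $I_D\otimes\Lambda_U$ after integration; your explicit induction and index bookkeeping fill in details the paper leaves implicit, including reading $C_t = \Phi_t\check{\mathbf{U}}(X_t)$ with $\check{\mathbf{U}}(X_t)\in\mathbb{R}^{Dp\times DM}$, the dimensionally consistent version of the paper's definition.
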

\begin{proof}
Initialize $\hat{c}_{0|-1}(x,x') = \check{U}^\top(x)\, \Psi_{0|-1}\, \check{U}(x')$ with $\Psi_{0|-1} = \Lambda_f$, then substitute~\eqref{dgp_eq:seperable_assumption} into~\eqref{dgp_eq:DGP_update} and~\eqref{dgp_eq:DGP_prediction}. The key identity is $\check{U}(s)\,\check{U}^\top(s) = I_D \otimes U(s)U^\top(s)$, which after integration yields $I_D \otimes \Lambda_U$.
\end{proof}
\begin{remark}
    Note that $Q_w$ does not need to be approximated, as the measurement noise function $w_t(x)$ is always evaluated for a finite set of spatial locations $X_t$ when taking observations according to~\eqref{dgp_eq:function_observation_model}.
\end{remark}
Under assumption~\eqref{dgp_eq:seperable_assumption}, the covariance estimate is closed under both the update and the prediction steps. However, the mean estimate computation still scales poorly as the time index $t$ increases. Efficient implementation of the mean estimate is enabled by the following lemma.
\begin{lemma}
\label{dgp_lem:prior_mean}
    If~\eqref{dgp_eq:seperable_assumption} holds and additionally
    \begin{equation}
        \label{dgp_eq:DGP_prior_mean_assumption}
        \bar{f}_0(x) = \check{U}^\top(x)\,\bar{z},
    \end{equation}
    with $\bar{z} \in \mathbb{R}^{DM}$ and $U(x)$ the same vector of functions as in~\eqref{dgp_eq:seperable_assumption}, then the update of the mean estimate reduces to
    \begin{equation}
    \label{dgp_eq:reduced_DGP_mean_update}
        {\hat{f}}_{t|t}(x) = \check{U}^\top(x)\, z_{t|t},
    \end{equation}
    with
    \begin{equation*}
        z_{t|t} := z_{t|t-1} + \Gamma_t(Y_t - C_t\, z_{t|t-1}).
    \end{equation*}

    Additionally, the prediction step in~\eqref{dgp_eq:DGP_prediction} is reduced to
    \begin{equation}
    \label{dgp_eq:DGP_mean_prediction}
        {\hat{f}}_{t+1|t}(x) = \check{U}^\top(x)\, z_{t+1|t},
    \end{equation}
    with
    \begin{equation*}
        z_{t+1|t} = A_M\, z_{t|t}.
    \end{equation*}
\end{lemma}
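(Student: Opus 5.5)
The plan is induction on $t$, carried jointly with Lemma~\ref{dgp_lem:seperable_kernels}, which already gives $\hat{c}_{t|t-1}(x,x') = \check{U}^\top(x)\,\Psi_{t|t-1}\,\check{U}(x')$ and $\hat{c}_{t|t}(x,x') = \check{U}^\top(x)\,\Psi_{t|t}\,\check{U}(x')$ for every $t$. The inductive hypothesis is that $\hat{f}_{t|t-1}(x) = \check{U}^\top(x)\, z_{t|t-1}$ for some $z_{t|t-1} \in \mathbb{R}^{DM}$. The base case is assumption~\eqref{dgp_eq:DGP_prior_mean_assumption} with $z_{0|-1} = \bar{z}$. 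I then check that the update~\eqref{dgp_eq:DGP_update} and the prediction~\eqref{dgp_eq:DGP_prediction} preserve this form and produce the stated coefficient recursions.

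\textbf{Update step.} Evaluating the inductive hypothesis at the stacked observation points, with the dimension-first ordering, gives $\mathbf{\hat{f}}_{t|t-1}(X_t) = \check{\mathbf{U}}(X_t)\, z_{t|t-1}$. Hence $\Phi_t\,\mathbf{\hat{f}}_{t|t-1}(X_t) = C_t\, z_{t|t-1}$ by~\eqref{dgp_eq:KF_matrices}. Likewise the covariance representation gives:
\begin{itemize}
\item the cross term $\mathbf{\hat{c}}_{t|t-1}(x,X_t)\,\Phi_t^\top = \check{U}^\top(x)\,\Psi_{t|t-1}\,C_t^\top$;
\item the Gram term $\Phi_t\,\mathbf{\hat{c}}_{t|t-1}(X_t,X_t)\,\Phi_t^\top = C_t\,\Psi_{t|t-1}\,C_t^\top$.
\end{itemize}
Together these give $L_t(x,X_t) = \check{U}^\top(x)\,\Gamma_t$, with $\Gamma_t$ exactly as in Lemma~\ref{dgp_lem:seperable_kernels}. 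Substituting into the mean line of~\eqref{dgp_eq:DGP_update} and factoring $\check{U}^\top(x)$ on the left yields~\eqref{dgp_eq:reduced_DGP_mean_update} with $z_{t|t} = z_{t|t-1} + \Gamma_t(Y_t - C_t z_{t|t-1})$.

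\textbf{Prediction step.} Insert $K_f(x,s) = \check{U}^\top(x)\,\Lambda\,\check{U}(s)$ and $\hat{f}_{t|t}(s) = \check{U}^\top(s)\, z_{t|t}$ into~\eqref{dgp_eq:DGP_prediction}. This gives
\begin{equation*}
\hat{f}_{t+1|t}(x) = \check{U}^\top(x)\,\Lambda\Bigl(\int_\mathcal{X} \check{U}(s)\check{U}^\top(s)\,d\nu(s)\Bigr) z_{t|t}.
\end{equation*}
By the identity $\check{U}(s)\check{U}^\top(s) = I_D \otimes U(s)U^\top(s)$ used in the proof of Lemma~\ref{dgp_lem:seperable_kernels}, the integral equals $I_D\otimes \Lambda_U$. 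Therefore $z_{t+1|t} = \Lambda(I_D\otimes\Lambda_U)\,z_{t|t} = A_M z_{t|t}$, which closes the induction.

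\textbf{Main obstacle.} The only delicate point is the bookkeeping in the update step: one must confirm that the stacking convention for $\mathbf{\hat{f}}(X_t)$ and $\mathbf{\hat{c}}(x,X_t)$ is the same as the block structure of $\check{\mathbf{U}}(X_t) = I_D\otimes\mathbf{U}(X_t)$. This is what makes $\mathbf{\hat{c}}_{t|t-1}(x,X_t) = \check{U}^\top(x)\Psi_{t|t-1}\check{\mathbf{U}}^\top(X_t)$ hold. The pairing is dimension-first on both sides, so it matches.

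A minor technical point is the interchange of integral and finite sum in the prediction step. It is justified provided $\Lambda_U$ exists, i.e.\ $u_i\in L_2(\mathcal{X},\nu)$, which is already implicit in Lemma~\ref{dgp_lem:seperable_kernels}.
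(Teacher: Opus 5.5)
Your proposal is correct and is the paper's argument: the paper's proof is just the instruction to substitute \eqref{dgp_eq:DGP_prior_mean_assumption} and the separable forms into \eqref{dgp_eq:DGP_update} and \eqref{dgp_eq:DGP_prediction}. Your induction carries out that substitution in full, via the base case $z_{0|-1}=\bar z$, the identity $L_t(x,X_t)=\check{U}^\top(x)\,\Gamma_t$, and the Gram-matrix step producing $A_M$. Your use of $C_t=\Phi_t\,\check{\mathbf{U}}(X_t)\in\mathbb{R}^{p\times DM}$ is the dimensionally consistent reading of \eqref{dgp_eq:KF_matrices}, whose literal $\Phi_t\,\check{\mathbf{U}}^\top(X_t)$ does not conform with the stated size of $\check{\mathbf{U}}(X_t)$.
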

\begin{proof}
    Substitute~\eqref{dgp_eq:DGP_prior_mean_assumption} into~\eqref{dgp_eq:DGP_update} and~\eqref{dgp_eq:DGP_prediction}.
\end{proof}

Note the similarity of the expressions in \textit{Lemma}~\ref{dgp_lem:prior_mean} to the standard KF. With \textit{Lemma~\ref{dgp_lem:seperable_kernels}} and \textit{Lemma~\ref{dgp_lem:prior_mean}}, the DGP can be interpreted as a Kalman filter where the state vector $z_t \in \mathbb{R}^{DM}$ represents the coefficients of the function estimate projected onto the basis $U$. The matrices $A_M$, $C_t$, $\Lambda_v$, and $W_t$ correspond directly to $A$, $C_t$, $V$, and $W_t$ from Section~\ref{dgp_subsec:KF}.

If both lemmas apply, the problem presented in Section~\ref{dgp_sec:problem} can be exactly estimated using the equations presented in this section. For the more general case, where~\eqref{dgp_eq:dynamic_function_system} does not satisfy the lemmas, the problem can be approximated using the basis function methods detailed next.

\subsection{Approximate DGP}
\label{dgp_subsec:approximate_DGP}
In general, the function $f_t$ does not lie exactly in $\operatorname{span}\{u_1, \ldots, u_M\}$. To handle this case, we define the coefficient representation of an arbitrary $f_t \in L_2(\mathcal{X})$ as the vector $z_t \in \mathbb{R}^{DM}$ satisfying $\check{U}^\top(x) z_t = (\pi_M f_t)(x)$, where $\pi_M$ denotes the orthogonal projection onto $\operatorname{span}\{u_1, \ldots, u_M\}$. This generalizes the coefficient vector from Lemma~\ref{dgp_lem:prior_mean}, where $f_t$ was assumed to lie exactly in the span of $U$.

The kernel functions $K_f$, $Q_f$, $Q_v$, and the prior mean $\bar{f}_0$ can be approximated in the separable form~\eqref{dgp_eq:seperable_assumption}--\eqref{dgp_eq:DGP_prior_mean_assumption} by $\mathcal{L}_2$-projection onto the basis $U$. This approximation relates to reduced-model Kalman filtering~\cite{Farrell2001}, but with an infinite-dimensional full model. The resulting coefficient vector for the prior mean is
\begin{equation}
\label{dgp_eq:closed-form_least_squares_mean}
    \bar{z}^* = (I_D \otimes \Lambda_U^{-1}) \int_\mathcal{X}\check{U}(x)\, \bar{f}_0(x)\, d\nu(x),
\end{equation}
and the $DM \times DM$ coefficient matrix for a matrix-valued kernel $K \in \{K_f, Q_f, Q_v\}$ is
\begin{equation}
\label{dgp_eq:closed-form_least_squares_kernel}
    \Lambda^*_K = (I_D \otimes \Lambda_U^{-1})\, J_K\, (I_D \otimes \Lambda_U^{-1}),
\end{equation}
where
\begin{equation*}
    J_K := \int_\mathcal{X}\!\!\int_\mathcal{X} \check{U}(x)\, K(x,x')\, \check{U}^\top(x')\, d\nu(x)\, d\nu(x').
\end{equation*}
In practice, the required integrals can be computed approximately using Riemann sums, which only requires pointwise evaluations of $K_f$, $Q_f$, $Q_v$, $\bar{f}_0$.

For the minimizer to be unique, the basis functions must be linearly independent. If additionally the basis is orthonormal, i.e., $\langle u_i, u_j \rangle := \int_\mathcal{X} u_i(x) u_j(x)\, d\nu(x) = \delta(i-j)$, then $\Lambda_U = I_M$ and the projection decouples into independent scalar minimizations.

Common choices for $U$ are the Fourier basis, piecewise-constant (bin) functions, radial basis functions, and hat functions. The approximation error introduced by the projection can be made arbitrarily small by increasing $M$, under mild conditions on the basis functions and the kernels. This is further analyzed in Section~\ref{dgp_sec:stability}.

\subsection{Connection to Existing Frameworks}
The approximate DGP presented in this section is closely related to the work of~\cite{Wikle1999}, who also project the IDE estimation problem onto a set of basis functions. In that work, the basis $U$ is required to be orthonormal, the initial condition $f_0$ is assumed to lie within the span of $U$, and the evolution kernel $K_f$ is assumed to have the separable structure $K_f(x,x') = \tilde{U}^\top(x) U(x')$ for some unknown function vector $\tilde{U}(x)$. The observations $Y_t$ are assumed to occur at the same spatial locations at every time step, with the number of measurement locations satisfying $p \geq M$. Under these assumptions, the coefficient dynamics reduce to a finite-dimensional state-space model that can be estimated using a standard Kalman filter. In contrast, the approximate DGP imposes none of these restrictions: the basis $U$ need not be orthonormal, the spatial locations $X_t$ may vary in time, and $p$ may be smaller than $M$. The key reason is that the approximate DGP is derived from the exact DGP by projecting the posterior onto the basis, rather than by directly assuming a finite-dimensional model.

The approximate DGP also shares structural similarities with the Kriged Kalman Filter (KKF)~\cite{Mardia1998}. In the KKF, a spatio-temporal field is decomposed into basis function coefficients that evolve with linear dynamics, and a spatially correlated residual that is estimated via kriging. Both the approximate DGP and the KKF propagate a finite-dimensional state vector $z_t \in \mathbb{R}^{DM}$ with a transition matrix and apply Kalman filter updates. However, the KKF transition matrix is specified directly, without connection to an evolution kernel $K_f$ or to an underlying exact infinite-dimensional estimation problem. In the DGP, the transition matrix $A_M = \Lambda\,(I_D \otimes \Lambda_U)$ is derived from the evolution kernel through the $\mathcal{L}_2$-projection, which provides a principled link between the finite-dimensional approximation and the exact problem.

A further distinction is that the IDE model in~\eqref{dgp_eq:dynamic_function_system} supports vector-valued states ($D > 1$), which enables the treatment of higher-order PDEs through state augmentation as described in Section~\ref{dgp_subsec:problem}. The DGP estimator inherits this generality, whereas the frameworks of~\cite{Wikle1999} and~\cite{Mardia1998} are restricted to scalar states. What remains is to quantify the error introduced by the basis function approximation.

\section{Stability and Estimation Error}
\label{dgp_sec:stability}
The approximate DGP from Section~\ref{dgp_subsec:approximate_DGP} replaces the infinite-dimensional estimation problem with a Kalman filter on the coefficient vector $z_t \in \mathbb{R}^{DM}$, with transition matrix $A_M$, observation matrix $C_t$, and noise covariance $W_t$ as defined in~\eqref{dgp_eq:KF_matrices}. For the steady-state analysis in this section, we assume stationary measurement locations $X_t = X$ for all $t$, so that $C_t = C$ and $W_t = W$. This assumption is standard in Kalman filter convergence theory~\cite{Anderson2005} and simplifies the notation, though the pointwise-in-time error identity derived below holds without it.

The analytical challenge is that the filter operates on a finite-dimensional subspace while the true state lives in $L_2(\mathcal{X};\mathbb{R}^D)$. When $f_t$ does not lie in $\operatorname{span}\{u_1, \ldots, u_M\}$, the evolution kernel couples the modelled and unmodelled components, producing a leakage term in the coefficient error dynamics that the filter ignores. The filter covariance $\Psi_{t|t}$ consequently underestimates the true estimation error. This section derives an exact decomposition of the functional estimation error that separates the noise-induced and truncation-induced contributions, and shows that all approximation errors vanish as $M \to \infty$.

\subsection{Stability of the Estimation Loop}
\label{dgp_subsec:filter_stability}
If the evolution kernel $K_f$ is square-integrable, then $\mathcal{K}$ is a Hilbert--Schmidt operator and therefore compact~\cite[Theorem~2.22]{Kress1999}. Its spectrum consists of at most a countable set of eigenvalues with the only accumulation point at zero~\cite[Theorem~3.7]{Kress1999}.
\begin{definition}
\label{dgp_def:stability}
The state dynamics~\eqref{dgp_eq:dynamic_function_system} are called \emph{stable} if the spectral radius of the integral operator satisfies $\rho(\mathcal{K}) := \sup\{|\lambda| : \lambda \in \sigma(\mathcal{K})\} < 1$.
\end{definition}
Let $\pi_M: L_2(\mathcal{X}) \to \operatorname{span}\{u_1, \ldots, u_M\}$ denote the orthogonal projection and $\mathcal{K}_M := \pi_M \mathcal{K}\, \pi_M$ the projected operator, whose matrix representation is $A_M$. The following convergence result is a classical consequence of Galerkin approximation theory for compact operators.
\begin{proposition}[Galerkin convergence, {\cite[Theorems~4.7, 10.9, 10.20]{Kress1999}}]
\label{dgp_prop:galerkin_convergence}
If the basis $\{u_m\}_{m=1}^\infty$ is complete in $L_2(\mathcal{X})$, then $\|\mathcal{K} - \mathcal{K}_M\| \to 0$ as $M \to \infty$. In particular, the eigenvalues of $A_M$ converge to those of $\mathcal{K}$, and $\rho(\mathcal{K}) < 1$ implies $\rho(A_M) < 1$ for all sufficiently large $M$.
\end{proposition}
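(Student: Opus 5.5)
The plan has three parts: operator-norm convergence of $\mathcal{K}_M$ to $\mathcal{K}$, identification of $A_M$ with the matrix of $\mathcal{K}_M$ on the span, and upper semicontinuity of the spectrum.

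\emph{Step 1: norm convergence.} Since $K_f$ is square-integrable, $\mathcal{K}$ is Hilbert--Schmidt and hence compact. Completeness of $\{u_m\}$ gives $\pi_M g \to g$ for every $g \in L_2(\mathcal{X};\mathbb{R}^D)$, with $\|\pi_M\| = 1$. Split
\begin{equation*}
\mathcal{K} - \mathcal{K}_M = (\mathcal{I}-\pi_M)\mathcal{K} + \pi_M \mathcal{K}(\mathcal{I}-\pi_M).
\end{equation*}
For the first term, pointwise convergence of $\pi_M \to \mathcal{I}$ is uniform on relatively compact sets. Taking the set $\mathcal{K}(B)$, with $B$ the unit ball, gives $\|(\mathcal{I}-\pi_M)\mathcal{K}\| \to 0$; this is the content of Kress, Theorem~10.9.

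For the second term, $\|\pi_M\mathcal{K}(\mathcal{I}-\pi_M)\| \le \|\mathcal{K}(\mathcal{I}-\pi_M)\| = \|(\mathcal{I}-\pi_M)\mathcal{K}^*\|$, since $\pi_M$ is self-adjoint. The operator $\mathcal{K}^*$ is also compact, so the same argument applies. In the vector-valued case, $\pi_M$ acts componentwise through $I_D \otimes$, and the argument is unchanged. This step is the one I expect to need the most care. Without orthogonality one only has a bounded family of projections, but here $\pi_M$ is orthogonal, so adjoints behave well.

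\emph{Step 2: matrix representation.} Write $g = \check U^\top z$ for an element of the span. Then
\begin{equation*}
\pi_M\mathcal{K} g = \check U^\top (I_D\otimes\Lambda_U^{-1}) J_{K_f} (I_D\otimes\Lambda_U^{-1})(I_D\otimes \Lambda_U) z = \check U^\top \Lambda^* (I_D\otimes\Lambda_U) z = \check U^\top A_M z.
\end{equation*}
So $A_M$ represents $\mathcal{K}_M$ restricted to the $DM$-dimensional span, which is invariant under $\mathcal{K}_M$. Since $\mathcal{K}_M$ vanishes on the orthogonal complement, the nonzero spectrum of $\mathcal{K}_M$ coincides with $\sigma(A_M)\setminus\{0\}$.

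\emph{Step 3: spectral consequences.} Use upper semicontinuity of the spectrum under norm perturbation. For any open set $O \supset \sigma(\mathcal{K})$, we have $\sigma(\mathcal{K}_M)\subset O$ once $\|\mathcal{K}-\mathcal{K}_M\|$ is small. This holds because $\|(\lambda-\mathcal{K})^{-1}\|$ is bounded on the compact set $\{|\lambda|\le \|\mathcal{K}\|+1\}\setminus O$, and a Neumann series argument then shows $\lambda-\mathcal{K}_M$ is invertible there.

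Choose $O = \{|\lambda| < r\}$ with $\rho(\mathcal{K}) < r < 1$. This gives $\rho(A_M) < r < 1$ for all large $M$. Convergence of the individual eigenvalues with multiplicities comes from the Riesz projection argument. Each nonzero eigenvalue of the compact $\mathcal{K}$ is isolated with finite algebraic multiplicity. Take a small circle around it; the spectral projections of $\mathcal{K}_M$ over that circle converge in norm to those of $\mathcal{K}$, so for large $M$ they have equal rank. This is Kress, Theorem~10.20, which I would cite rather than reprove.
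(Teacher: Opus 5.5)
Your proposal is correct and takes essentially the same route as the paper. The paper gives no argument beyond citing Kress for norm convergence of the orthogonal-projection (Galerkin) approximation of a compact operator and for spectral convergence under norm perturbation. Your three steps are exactly the standard content behind those citations: splitting $\mathcal{K}-\mathcal{K}_M$ and using compactness of $\mathcal{K}$ and $\mathcal{K}^*$, identifying $A_M$ as the matrix of $\mathcal{K}_M$ on the span, and upper semicontinuity of the spectrum plus Riesz projections. The one unstated detail, invertibility of $\lambda-\mathcal{K}_M$ for $|\lambda|>\|\mathcal{K}\|+1$, follows immediately from $\|\mathcal{K}_M\|\le\|\mathcal{K}\|$.
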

The completeness requirement is satisfied by standard choices such as the Fourier basis and piecewise-constant (bin) functions.

By standard Kalman filter theory~\cite{Anderson2005}, under detectability of $(A_M, C)$ and stabilizability of $(A_M, \Lambda_v^{1/2})$, the filter covariance $\Psi_{t|t}$ from Lemma~\ref{dgp_lem:seperable_kernels} converges to a unique positive semi-definite limit $\Psi_\infty := \lim_{t\to\infty}\Psi_{t|t}$. When $f_t$ lies exactly in $\operatorname{span}\{u_1, \ldots, u_M\}$ for all $t$, the filter covariance $\Psi_{t|t}$ equals the true error covariance $\mathbb{E}[e_t e_t^\top]$, and the filter is the exact MMSE estimator. When $f_t$ does not lie exactly in the span, $\Psi_{t|t}$ underestimates the true error because the filter is unaware of the leakage from the unmodelled subspace. This gap is quantified in Proposition~\ref{dgp_prop:overconfidence}. Let $\Gamma_\infty$ denote the corresponding steady-state Kalman gain and define
\begin{equation}
\label{dgp_eq:closed_loop}
    \bar{A} := A_M - \Gamma_\infty C.
\end{equation}
The closed-loop matrix $\bar{A}$ is Schur stable, that is, $\rho(\bar{A}) < 1$. This holds regardless of $\rho(\mathcal{K})$. The DGP filter stably estimates functions with unstable dynamics, provided the system is detectable from the available measurement locations.

\subsection{Steady-State Estimation Error for Finite \texorpdfstring{$M$}{M}}
\label{dgp_subsec:finite_m_error}
In general, the function $f_t$ does not lie exactly in $\operatorname{span}\{u_1, \ldots, u_M\}$, so the basis approximation introduces an additional error on top of the noise-induced estimation error. To quantify this effect, let $\pi_M^\perp := I - \pi_M$ and note that the filter estimates only the in-subspace component $\hat{f}_{t|t}(x) = \check{U}^\top(x)\, \hat{z}_{t|t}$. The true pointwise error therefore decomposes as
\begin{equation}
\label{dgp_eq:error_decomp}
    f_t(x) - \hat{f}_{t|t}(x) = \check{U}^\top(x)\, e_t + (\pi_M^\perp f_t)(x),
\end{equation}
where $z_t$ is the coefficient representation of $\pi_M f_t$ as defined in Section~\ref{dgp_subsec:approximate_DGP}, and $e_t := z_t - \hat{z}_{t|t} \in \mathbb{R}^{DM}$ denotes the in-subspace estimation error, i.e., the discrepancy between the true projection coefficients and the filter's estimate. The second term $(\pi_M^\perp f_t)(x)$ is the out-of-subspace residual that no finite-dimensional filter can correct.

Projecting the true dynamics $f_t = \mathcal{K} f_{t-1} + v_t$ onto the basis yields
\begin{equation}
\label{dgp_eq:true_coeff_dynamics}
    z_t = A_M z_{t-1} + \ell_t + v_t^M,
\end{equation}
where $v_t^M$ has covariance $\Lambda_v$ and $\ell_t := \pi_M \mathcal{K}\,\pi_M^\perp f_{t-1}$ is the \emph{leakage}: the contribution of the out-of-subspace part of $f_{t-1}$ into the in-subspace dynamics via the evolution kernel. The filter assumes $\ell_t \equiv 0$, which is not the case when the basis projection is inexact. The filter update for the coefficient vector is (cf.\ Lemma~\ref{dgp_lem:prior_mean})
\begin{equation}
\label{dgp_eq:filter_coeff_update}
    \hat{z}_{t|t} = \hat{z}_{t|t-1} + \Gamma_\infty(Y_t - C\,\hat{z}_{t|t-1}), \qquad \hat{z}_{t+1|t} = A_M\, \hat{z}_{t|t}.
\end{equation}
Subtracting~\eqref{dgp_eq:filter_coeff_update} from~\eqref{dgp_eq:true_coeff_dynamics} gives the error dynamics

\begin{equation}
\label{dgp_eq:error_dynamics}
    e_t = \bar{A}\, e_{t-1} + \ell_t - \Gamma_\infty w_t + v_t^M.
\end{equation}
The leakage term $\ell_t$ depends on the out-of-subspace component of $f_{t-1}$, whose second moments may or may not converge depending on the state dynamics. The following two results formalize this.
\begin{proposition}
\label{dgp_prop:leakage_existence}
If the state dynamics are stable in the sense of Definition~\ref{dgp_def:stability} and the filter is stable ($\rho(\bar{A}) < 1$), then $f_t$ and $e_t$ have bounded second moments, and the following limits exist:
\begin{align}
\label{dgp_eq:WM}
    \Omega_M &:= \lim_{t\to\infty}\mathbb{E}[\ell_t \ell_t^\top] \succeq 0, \\
\label{dgp_eq:cross_cov}
    \Xi_M &:= \lim_{t\to\infty}\mathbb{E}[e_{t-1}\,\ell_t^\top].
\end{align}
The cross-covariance $\Xi_M$ is in general nonzero because $e_{t-1}$ and $\ell_t$ both depend on $f_{t-1}$.
\end{proposition}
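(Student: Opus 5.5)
The plan is to treat the joint process $(f_t, e_t)$ as the state of a stable linear stochastic system driven by independent, identically distributed noise, and to show that its second moments converge.

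\emph{Step 1: the state $f_t$.} Write $m_t := \mathbb{E}[f_t]$ and let $P_t$ be the covariance operator of $f_t$ on $L_2(\mathcal{X};\mathbb{R}^D)$. From $f_t = \mathcal{K} f_{t-1} + v_t$, with $v_t$ independent of $f_{t-1}$, we get
\[
m_t = \mathcal{K}^t \bar{f}_0, \qquad P_t = \mathcal{K}^t P_0 (\mathcal{K}^*)^t + \sum_{j=0}^{t-1} \mathcal{K}^j P_v (\mathcal{K}^*)^j ,
\]
where $P_0, P_v$ are the trace-class operators with kernels $Q_f, Q_v$. I assume these kernels are square-integrable, so that the processes live in $L_2$. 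Since $\rho(\mathcal{K})<1$, Gelfand's formula gives $\|\mathcal{K}^j\| \le c\, r^j$ for some $r\in(\rho(\mathcal{K}),1)$. Hence $m_t \to 0$ geometrically. The series for $P_t$ converges absolutely in trace norm, to the unique solution $P_\infty$ of the operator Lyapunov equation $P = \mathcal{K} P \mathcal{K}^* + P_v$. In particular $\sup_t \mathbb{E}\|f_t\|^2 = \sup_t \big(\|m_t\|^2 + \operatorname{tr} P_t\big) < \infty$.

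\emph{Step 2: the leakage $\ell_t = \pi_M \mathcal{K}\,\pi_M^\perp f_{t-1}$.} Because $\ell_t$ is a bounded linear image of $f_{t-1}$, we have
\[
\mathbb{E}[\ell_t\ell_t^\top] = G\big(P_{t-1} + m_{t-1} m_{t-1}^*\big) G^*,
\]
where $G$ is the bounded map $f \mapsto$ coefficients of $\pi_M\mathcal{K}\pi_M^\perp f$ in $\mathbb{R}^{DM}$ (through $I_D\otimes\Lambda_U^{-1}$). This converges to $\Omega_M = G P_\infty G^* \succeq 0$.

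\emph{Step 3: the joint system and its limits.} For $e_t$ and the cross-term, I stack $\xi_t := (f_t, e_t)$ on $H := L_2(\mathcal{X};\mathbb{R}^D)\times\mathbb{R}^{DM}$. Using the error dynamics for $e_t$, this gives
\[
\xi_t = \mathcal{A}\,\xi_{t-1} + \eta_t, \qquad \mathcal{A} = \begin{bmatrix}\mathcal{K} & 0\\ G & \bar{A}\end{bmatrix},
\]
with $\eta_t = (v_t,\, v_t^M - \Gamma_\infty w_t)$. Here $v_t^M$ is the coefficient of $\pi_M v_t$, so it is correlated with the first component of $\eta_t$. The $\eta_t$ are i.i.d.\ in $t$ and independent of $\xi_{t-1}$, with a fixed trace-class covariance $R$. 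This is the only place where care about which noise enters where is needed.

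The operator $\mathcal{A}$ is block lower-triangular, so $\sigma(\mathcal{A}) \subseteq \sigma(\mathcal{K})\cup\sigma(\bar{A})$. To see this, invert $\lambda-\mathcal{A}$ explicitly by block back-substitution whenever both diagonal resolvents exist. It follows that $\rho(\mathcal{A}) \le \max(\rho(\mathcal{K}),\rho(\bar{A})) < 1$.

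Repeating the Step 1 argument with $\mathcal{A}$ in place of $\mathcal{K}$ gives $\sup_t\mathbb{E}\|\xi_t\|^2<\infty$, which yields bounded second moments of $e_t$. The same argument gives convergence of $\mathbb{E}[\xi_t\xi_t^*]$ to $\Pi_\infty = \sum_j \mathcal{A}^j R (\mathcal{A}^*)^j$.

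Finally,
\[
\mathbb{E}[e_{t-1}\ell_t^\top] = E_2\,\mathbb{E}[\xi_{t-1}\xi_{t-1}^*]\,E_1^* G^*,
\]
where $E_1, E_2$ are the coordinate maps onto the two components of $H$. This converges to $\Xi_M = E_2\Pi_\infty E_1^* G^*$, and the existence of $\Omega_M$ can be re-derived the same way.

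\emph{Remark on nonvanishing.} $\Xi_M$ is generally nonzero because $E_2\Pi_\infty E_1^*$, the stationary covariance between $e$ and $f$, is built from the coupling $G$ and the shared noise $v_t$. It vanishes only in degenerate cases, for example when $\pi_M\mathcal{K}\pi_M^\perp=0$. This needs no proof beyond that observation.

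\emph{Main obstacle.} The hard step is the infinite-dimensional Lyapunov and series argument. It requires (i) converting the spectral condition $\rho<1$ into geometric decay of $\|\mathcal{A}^j\|$ via Gelfand's formula, which holds for any bounded operator, and (ii) verifying that the triangular structure gives the spectral bound on $\mathcal{A}$. It also requires that the noise covariances be trace class, so that the sums converge in trace norm and $\mathbb{E}\|\cdot\|^2$ is finite. I would state that last point as the standing regularity assumption: $Q_v$ and $Q_f$ continuous or square-integrable on a finite-measure domain, with the initial condition having finite second moment.
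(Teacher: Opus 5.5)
Your argument is correct, and it takes a genuinely different and more rigorous route than the paper.

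\emph{The paper's route.} It proceeds component by component. It cites Da Prato--Zabczyk for convergence of $f_t$ to a stationary law. It transfers this to $\ell_t$ through the bound $\|\mathbb{E}[\ell_t\ell_t^\top]\|\le\|\pi_M\mathcal{K}\pi_M^\perp\|^2\,\mathbb{E}\|f_{t-1}\|^2$. It then asserts that the second moments of $e_t$ converge because $e_t$ is driven through the stable recursion, and finally invokes Cauchy--Schwarz for $\Xi_M$. This is short, but the last step only shows that $\mathbb{E}[e_{t-1}\ell_t^\top]$ stays bounded, not that it has a limit. The claim about $e_t$ is also not justified by the recursion alone, since expanding $\mathbb{E}[e_te_t^\top]$ reintroduces the cross term $\bar A\,\mathbb{E}[e_{t-1}\ell_t^\top]$.

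\emph{Your route.} Lifting to $\xi_t=(f_t,e_t)$ with the block lower-triangular $\mathcal{A}$ removes this circularity. The explicit resolvent gives $\rho(\mathcal{A})\le\max(\rho(\mathcal{K}),\rho(\bar A))<1$, and Gelfand's formula turns this into geometric decay of $\|\mathcal{A}^j\|$. The whole joint second-moment operator then converges in trace norm. So every block converges at once, including the cross block that defines $\Xi_M$, with explicit limits $\Omega_M=GP_\infty G^*$ and $\Xi_M=E_2\Pi_\infty E_1^*G^*$. Here your $P_\infty$ is the stationary covariance operator of $f_t$, not the paper's $P_\infty=\lim\mathbb{E}[e_te_t^\top]$. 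Your argument also uses only boundedness of $\mathcal{K}$, not the Hilbert--Schmidt property, and needs no external stationarity theorem.

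\emph{One correction to your regularity assumption.} Your approach requires stating the regularity explicitly, and the sufficient condition you name needs fixing. Square-integrability of $Q_f$ and $Q_v$ only makes the covariance operators Hilbert--Schmidt. It says nothing about the diagonal, so it does not give $\mathbb{E}\|v_t\|_{L_2}^2<\infty$. What you actually need is $\mathbb{E}\|v_t\|_{L_2}^2=\int_{\mathcal{X}}\operatorname{tr}Q_v(x,x)\,d\nu(x)<\infty$, and likewise for $Q_f$. This holds, for instance, for kernels with bounded diagonal on a finite-measure domain, and it makes the covariance operators trace class as your series argument requires.
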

\begin{proof}
Stability of the state dynamics ($\rho(\mathcal{K}) < 1$) ensures that $f_t$ converges to a stationary distribution; for the discrete-time case with Hilbert--Schmidt operators, this follows from~\cite[Chapter~11]{DaPrato2014}.
Since $\ell_t = \pi_M \mathcal{K}\,\pi_M^\perp f_{t-1}$ is a bounded linear transformation of $f_{t-1}$, boundedness of $\|\mathbb{E}[\ell_t \ell_t^\top]\| \leq \|\pi_M \mathcal{K}\,\pi_M^\perp\|^2 \cdot \mathbb{E}[\|f_{t-1}\|^2]$ ensures that the covariance inherits this convergence.
For the cross-covariance, note that $e_t$ is driven by $\ell_t$, $w_t$, and $v_t^M$ through the stable recursion~\eqref{dgp_eq:error_dynamics}, so its second moments also converge. Convergence of $\Xi_M$ then follows from the Cauchy--Schwarz inequality: $\|\mathbb{E}[e_{t-1}\ell_t^\top]\| \leq (\mathbb{E}[\|e_{t-1}\|^2])^{1/2}(\mathbb{E}[\|\ell_t\|^2])^{1/2}$, where both factors converge.
\end{proof}

\begin{proposition}
\label{dgp_prop:lyapunov}
Let $(A_M, C)$ be detectable and $(A_M, \Lambda_v^{1/2})$ stabilizable, so that $\rho(\bar{A}) < 1$, and suppose $\Omega_M$ and $\Xi_M$ defined in~\eqref{dgp_eq:WM}--\eqref{dgp_eq:cross_cov} are finite. Then the true steady-state in-subspace error covariance $P_\infty := \lim_{t\to\infty}\mathbb{E}[e_t e_t^\top]$ satisfies the modified discrete Lyapunov equation
\begin{equation}
\label{dgp_eq:modified_lyapunov}
    P_\infty = \bar{A}\, P_\infty\, \bar{A}^\top + \Gamma_\infty W \Gamma_\infty^\top + \Lambda_v + \Omega_M + \bar{A}\,\Xi_M + \Xi_M^\top\bar{A}^\top.
\end{equation}
The steady-state filter covariance $\Psi_\infty$ satisfies
\begin{equation}
\label{dgp_eq:filter_lyapunov}
    \Psi_\infty = \bar{A}\, \Psi_\infty\, \bar{A}^\top + \Gamma_\infty W \Gamma_\infty^\top + \Lambda_v,
\end{equation}
which is~\eqref{dgp_eq:modified_lyapunov} without the leakage terms $\Omega_M$ and $\Xi_M$.
\end{proposition}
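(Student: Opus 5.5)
The plan is to take second moments of the error recursion~\eqref{dgp_eq:error_dynamics} directly, expand the outer product, and then pass to the limit $t\to\infty$. For each $t$ define $P_t := \mathbb{E}[e_t e_t^\top]$, $\Omega_t := \mathbb{E}[\ell_t\ell_t^\top]$ and $\Xi_t := \mathbb{E}[e_{t-1}\ell_t^\top]$. By Proposition~\ref{dgp_prop:leakage_existence} and the hypothesis, $\Omega_t \to \Omega_M$ and $\Xi_t \to \Xi_M$.

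\emph{Independence structure.} The first step is to identify which terms of $e_t = \bar{A}e_{t-1} + \ell_t - \Gamma_\infty w_t + v_t^M$ are mutually uncorrelated. The measurement noise $w_t$ is independent of $f_0$, of $v_0,\ldots,v_t$, and of $w_0,\ldots,w_{t-1}$. Since $e_{t-1}$ and $\ell_t = \pi_M\mathcal{K}\pi_M^\perp f_{t-1}$ are measurable functions of $(f_0, v_{0:t-1}, w_{0:t-1})$, $w_t$ is uncorrelated with both. The process disturbance $v_t$, and hence its projection $v_t^M$, is independent of $f_{t-1}$ and of all earlier noises, so it is likewise uncorrelated with $e_{t-1}$ and $\ell_t$. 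Finally, $w_t$ and $v_t^M$ are independent of each other. It follows that the only cross term that survives is the one between $\bar{A}e_{t-1}$ and $\ell_t$, which is exactly the coupling recorded in Proposition~\ref{dgp_prop:leakage_existence}.

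\emph{Expanding the outer product.} With the above, $\mathbb{E}[e_te_t^\top]$ gives the finite-time recursion
\begin{equation*}
P_t = \bar{A}P_{t-1}\bar{A}^\top + \Gamma_\infty W\Gamma_\infty^\top + \Lambda_v + \Omega_t + \bar{A}\,\Xi_t + \Xi_t^\top\bar{A}^\top =: \bar{A}P_{t-1}\bar{A}^\top + R_t .
\end{equation*}
Here $R_t \to R := \Gamma_\infty W\Gamma_\infty^\top + \Lambda_v + \Omega_M + \bar{A}\,\Xi_M + \Xi_M^\top\bar{A}^\top$.

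\emph{Passing to the limit.} This is the step I expect to need the most care, because it must establish that $P_\infty$ exists rather than merely assume it. Unrolling the recursion gives
\begin{equation*}
P_t = \bar{A}^tP_0(\bar{A}^\top)^t + \sum_{k=0}^{t-1}\bar{A}^kR_{t-k}(\bar{A}^\top)^k .
\end{equation*}
Since $\rho(\bar{A})<1$, there are constants $c>0$ and $\gamma\in(0,1)$ with $\|\bar{A}^k\|\le c\gamma^k$. Because $R_t\to R$, the sequence $R_t$ is bounded. A standard discrete convolution (Toeplitz-type) argument then yields $P_t \to \sum_{k\ge0}\bar{A}^kR(\bar{A}^\top)^k$. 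This limit is the unique solution of the Lyapunov equation with right-hand side $R$, which is~\eqref{dgp_eq:modified_lyapunov}.

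\emph{The filter covariance.} For~\eqref{dgp_eq:filter_lyapunov}, I would observe that the Kalman filter's own covariance recursion is the second-moment recursion of the same error model with $\ell_t\equiv0$. In Joseph form with the steady-state gain, one step reads $\Psi \mapsto \bar{A}\Psi\bar{A}^\top + \Gamma_\infty W\Gamma_\infty^\top + \Lambda_v$, consistent with the error model~\eqref{dgp_eq:error_dynamics} as stated. Its fixed point $\Psi_\infty$, which exists by the Riccati convergence cited from~\cite{Anderson2005}, therefore satisfies~\eqref{dgp_eq:filter_lyapunov}. Equivalently, set $\Omega_M=\Xi_M=0$ in the argument above. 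Uniqueness again follows from $\rho(\bar{A})<1$.
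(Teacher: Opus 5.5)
Your argument for~\eqref{dgp_eq:modified_lyapunov} follows the paper's route. You expand the second moment of~\eqref{dgp_eq:error_dynamics}, drop the cross terms with $w_t$ and $v_t^M$ by independence, and identify $\Omega_M$ and $\Xi_M$. It is also more careful than the paper's. The paper expands directly ``in steady state'' and so presupposes that $\lim_t\mathbb{E}[e_te_t^\top]$ exists. Your finite-$t$ recursion $P_t=\bar A P_{t-1}\bar A^\top+R_t$, together with the Toeplitz-type limit, proves that existence from $\rho(\bar A)<1$ and $\Omega_t\to\Omega_M$, $\Xi_t\to\Xi_M$ alone. That is exactly what backs the paper's remark that $\rho(\mathcal{K})<1$ is not needed.

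The second half, however, contains a false step. For the recursion of Lemma~\ref{dgp_lem:seperable_kernels} with fixed gain $\Gamma_\infty$, the Joseph form reads $\Psi_{t|t}=(I-\Gamma_\infty C)\bigl(A_M\Psi_{t-1|t-1}A_M^\top+\Lambda_v\bigr)(I-\Gamma_\infty C)^\top+\Gamma_\infty W\Gamma_\infty^\top$. Its closed-loop matrix is $(I-\Gamma_\infty C)A_M$, not $\bar A=A_M-\Gamma_\infty C$, and $\Lambda_v$ is sandwiched by $I-\Gamma_\infty C$. So the map you wrote is not the filter's covariance recursion, and the Riccati limit does not satisfy~\eqref{dgp_eq:filter_lyapunov} in general. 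A quick check: every positive semidefinite solution of~\eqref{dgp_eq:filter_lyapunov} satisfies $\Psi_\infty\succeq\Lambda_v$. Yet in the scalar case with $C=1$ one has $\Psi_{t|t}=\Psi_{t|t-1}W/(\Psi_{t|t-1}+W)<W$, so $\lim_t\Psi_{t|t}<\Lambda_v$ whenever $W<\Lambda_v$. Hence ``its fixed point, which exists by Riccati convergence, therefore satisfies~\eqref{dgp_eq:filter_lyapunov}'' does not follow.

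The paper does not go through the Riccati recursion at all. It reads~\eqref{dgp_eq:filter_lyapunov} off~\eqref{dgp_eq:error_dynamics} with $\ell_t\equiv0$, which is your closing ``set $\Omega_M=\Xi_M=0$'' sentence. Present the second half that way and drop the Joseph-form claim, but be aware of what this means. $\Psi_\infty$ is then the stationary second moment of the leakage-free version of~\eqref{dgp_eq:error_dynamics}. That equals $\lim_t\Psi_{t|t}$ only if~\eqref{dgp_eq:error_dynamics} were the filter's exact error law, and it is not. Subtracting~\eqref{dgp_eq:filter_coeff_update} from~\eqref{dgp_eq:true_coeff_dynamics} actually gives $e_t=(I-\Gamma_\infty C)A_Me_{t-1}+(I-\Gamma_\infty C)(\ell_t+v_t^M)-\Gamma_\infty w_t$, plus a further term from $\pi_M^\perp f_t$ evaluated at the sensor locations. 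With that recursion your Joseph-form idea does identify the leakage-free stationary covariance with the Riccati limit, but it proves corrected versions of both displayed equations. The mismatch is therefore a defect of~\eqref{dgp_eq:error_dynamics}, which the paper's proof shares, not something your argument can repair.

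In the same vein, Riccati theory guarantees $\rho\bigl((I-\Gamma_\infty C)A_M\bigr)<1$, not $\rho(A_M-\Gamma_\infty C)<1$. For example, scalar $A_M=3$, $C=1$ gives $A_M-\Gamma_\infty>2$, since $\Gamma_\infty<1$. You are right to take $\rho(\bar A)<1$ from the statement rather than derive it.
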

\begin{proof}
Recall the error dynamics~\eqref{dgp_eq:error_dynamics}: $e_t = \bar{A}\, e_{t-1} + \ell_t - \Gamma_\infty w_t + v_t^M$. Taking the steady-state second moment and expanding yields
\begin{equation*}
\begin{aligned}
    P_\infty 
    = \bar{A}\, P_\infty\, \bar{A}^\top + \mathbb{E}[\ell_t \ell_t^\top] + \Gamma_\infty W \Gamma_\infty^\top + \Lambda_v 
    + \bar{A}\,\mathbb{E}[e_{t-1}\,\ell_t^\top] + \mathbb{E}[\ell_t\, e_{t-1}^\top]\,\bar{A}^\top,
\end{aligned}
\end{equation*}
where we used that $w_t$ and $v_t^M$ are independent of $(e_{t-1},\,\ell_t)$, so all cross terms with these vanish, and $w_t$ and $v_t^M$ are mutually independent. Identifying $\Omega_M = \mathbb{E}[\ell_t \ell_t^\top]$ and $\Xi_M = \mathbb{E}[e_{t-1}\,\ell_t^\top]$ from Proposition~\ref{dgp_prop:leakage_existence} gives~\eqref{dgp_eq:modified_lyapunov}.

The cross-covariance $\Xi_M$ is generically nonzero: both $e_{t-1} = z_{t-1} - \hat{z}_{t-1|t-1}$ and $\ell_t = \pi_M\mathcal{K}\,\pi_M^\perp f_{t-1}$ are functions of the same realization $f_{t-1}$, so they are correlated. The cross term reflects that the noise driving $e_t$ is colored when the in-subspace and out-of-subspace parts of $f_{t-1}$ are correlated. The filter's Lyapunov equation~\eqref{dgp_eq:filter_lyapunov} follows because the filter assumes $\ell_t \equiv 0$, which eliminates $\Omega_M$ and $\Xi_M$. Proposition~\ref{dgp_prop:leakage_existence} gives sufficient conditions for $\Omega_M$ and $\Xi_M$ to be finite. However, Proposition~\ref{dgp_prop:lyapunov} itself does not require $\rho(\mathcal{K}) < 1$.
\end{proof}
The quantities $P_\infty$ and $\Psi_\infty$ characterize the in-subspace error $e_t$ as a matrix in $\mathbb{R}^{DM \times DM}$. In practice, the quantity of interest is the functional estimation error $\|f_t - \hat{f}_{t|t}\|_{L_2}$. The following result connects the two.
\begin{proposition}
\label{dgp_prop:functional_error}
For all $t \geq 0$,
\begin{equation}
\label{dgp_eq:functional_error}
\begin{aligned}
    \mathbb{E}\!\left[\|f_t - \hat{f}_{t|t}\|_{L_2}^2\right] 
    = \operatorname{tr}\!\left((I_D \otimes \Lambda_U)\, \mathbb{E}[e_t\, e_t^\top]\right) 
    + \mathbb{E}\!\left[\|\pi_M^\perp f_t\|_{L_2}^2\right],
\end{aligned}
\end{equation}
with $\Lambda_U$ the Gram matrix of the basis defined in~\eqref{dgp_eq:KF_matrices}. In particular, as $t \to \infty$ under the conditions of Proposition~\ref{dgp_prop:lyapunov},
\begin{equation}
\label{dgp_eq:functional_error_ss}
\begin{aligned}
    \lim_{t\to\infty}\mathbb{E}\!\left[\|f_t - \hat{f}_{t|t}\|_{L_2}^2\right] 
    = \operatorname{tr}\!\left((I_D \otimes \Lambda_U)\, P_\infty\right) 
     + \lim_{t\to\infty}\mathbb{E}\!\left[\|\pi_M^\perp f_t\|_{L_2}^2\right].
\end{aligned}
\end{equation}
\end{proposition}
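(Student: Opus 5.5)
The plan is to start from the pointwise decomposition \eqref{dgp_eq:error_decomp}, $f_t - \hat{f}_{t|t} = \check{U}^\top e_t + \pi_M^\perp f_t$, and expand the squared $L_2(\mathcal{X};\mathbb{R}^D)$ norm. This gives three terms, which I will handle separately:
\begin{equation*}
\|\check{U}^\top e_t\|_{L_2}^2, \qquad 2\langle \check{U}^\top e_t, \pi_M^\perp f_t\rangle_{L_2}, \qquad \|\pi_M^\perp f_t\|_{L_2}^2 .
\end{equation*}
I will argue realization by realization (pathwise) and take expectations only at the end. As a result, no independence or stationarity assumption is needed, and the identity holds for every $t$, including with time-varying $X_t$.

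\textbf{Cross term.} The key observation is that $\check{U}^\top(x) e_t$ lies componentwise in $\operatorname{span}\{u_1,\ldots,u_M\}$. Here the orthogonal projection $\pi_M$ acts componentwise on $L_2(\mathcal{X};\mathbb{R}^D)$, consistent with $\check{U} = I_D \otimes U$. On the other hand, $\pi_M^\perp f_t$ lies in the orthogonal complement, by definition of an orthogonal projection. Hence the inner product vanishes identically for every realization, and not merely in expectation. This orthogonality is the only substantive point of the proof. The step that needs care is checking that $\hat{f}_{t|t} = \check{U}^\top \hat z_{t|t}$ and $\check{U}^\top z_t = \pi_M f_t$ (the definition in Section~\ref{dgp_subsec:approximate_DGP}) really make the in-subspace error exactly $\check{U}^\top e_t$. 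This follows immediately from those definitions.

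\textbf{First term.} I will write
\begin{equation*}
\|\check{U}^\top e_t\|_{L_2}^2 = \int_\mathcal{X} e_t^\top \check{U}(x)\check{U}^\top(x) e_t \, d\nu(x) = e_t^\top (I_D \otimes \Lambda_U) e_t,
\end{equation*}
using the identity $\check{U}\check{U}^\top = I_D\otimes UU^\top$ already invoked in the proof of Lemma~\ref{dgp_lem:seperable_kernels}. By the cyclic property of the trace this equals $\operatorname{tr}((I_D\otimes\Lambda_U) e_t e_t^\top)$. Taking expectations and using linearity of trace and expectation then gives \eqref{dgp_eq:functional_error}, provided the second moments are finite so that all terms are well-defined. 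If they are not finite, both sides are infinite consistently.

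\textbf{Steady state.} For \eqref{dgp_eq:functional_error_ss}, I will pass to the limit $t\to\infty$ in \eqref{dgp_eq:functional_error}. The first term converges to $\operatorname{tr}((I_D\otimes\Lambda_U)P_\infty)$ because $P_\infty = \lim \mathbb{E}[e_te_t^\top]$ exists under the hypotheses of Proposition~\ref{dgp_prop:lyapunov}, and trace against a fixed matrix is continuous. The limit of the left-hand side therefore exists if and only if the limit of $\mathbb{E}\|\pi_M^\perp f_t\|^2$ exists, and in that case the two are equal. Under the stability assumption of Proposition~\ref{dgp_prop:leakage_existence}, this limit exists because $f_t$ has convergent second moments. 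I will state the limit identity with that understanding.
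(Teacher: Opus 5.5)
Your proposal is correct and follows the paper's proof essentially step for step. Both use the pathwise orthogonality of $\check{U}^\top e_t$ and $\pi_M^\perp f_t$ (Pythagoras), the Gram identity $\|\check{U}^\top e_t\|_{L_2}^2 = e_t^\top (I_D\otimes\Lambda_U)e_t$, and the trace identity under expectation. Your explicit handling of the $t\to\infty$ limit is a small but welcome sharpening of what the paper leaves implicit: you note that the limit of the left-hand side exists exactly when the residual limit exists, and that this is secured by the stability hypothesis of Proposition~\ref{dgp_prop:leakage_existence} rather than by Proposition~\ref{dgp_prop:lyapunov} alone.
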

\begin{proof}
By~\eqref{dgp_eq:error_decomp}, the error decomposes as $f_t(x) - \hat{f}_{t|t}(x) = \check{U}^\top(x)\, e_t + (\pi_M^\perp f_t)(x)$. The first term lies in $\operatorname{span}\{u_1, \ldots, u_M\}^D$ and the second in its orthogonal complement, so $\langle \check{U}^\top(\cdot)\,e_t,\, \pi_M^\perp f_t\rangle_{L_2} = 0$ for every realization. By the Pythagorean theorem,
\begin{equation*}
    \|f_t - \hat{f}_{t|t}\|_{L_2}^2 = \|\check{U}^\top(\cdot)\, e_t\|_{L_2}^2 + \|\pi_M^\perp f_t\|_{L_2}^2.
\end{equation*}
Computing the first term:
\begin{equation*}
\begin{aligned}
    \|\check{U}^\top(\cdot)\, e_t\|_{L_2}^2 &= e_t^\top\!\left(\int_\mathcal{X} \check{U}(x)\,\check{U}^\top(x)\,d\nu(x)\right) e_t \\
    &= e_t^\top(I_D \otimes \Lambda_U)\, e_t.
\end{aligned}
\end{equation*}
Taking expectations and using $\mathbb{E}[e_t^\top A\, e_t] = \operatorname{tr}(A\,\mathbb{E}[e_t\, e_t^\top])$ gives~\eqref{dgp_eq:functional_error}.
\end{proof}
The first term in~\eqref{dgp_eq:functional_error} is the in-subspace estimation error, weighted by the Gram matrix of the basis functions. The second term is the out-of-subspace residual that no finite-dimensional filter can correct. For an orthonormal basis ($\Lambda_U = I_M$), the first term reduces to $\operatorname{tr}(\mathbb{E}[e_t\, e_t^\top])$.

The filter reports $\Psi_{t|t}$ as its error covariance, but this differs from $\mathbb{E}[e_t\, e_t^\top]$ when the basis projection is inexact. The following proposition quantifies the gap between $P_\infty$ and $\Psi_\infty$.
\begin{proposition}
\label{dgp_prop:overconfidence}
Let $(A_M, C)$ be detectable and $(A_M, \Lambda_v^{1/2})$ stabilizable, and suppose $\Omega_M$ and $\Xi_M$ exist. Then $P_\infty - \Psi_\infty$ satisfies the discrete Lyapunov equation
\begin{equation}
\label{dgp_eq:overconfidence}
    P_\infty - \Psi_\infty = \bar{A}(P_\infty - \Psi_\infty)\bar{A}^\top + \Omega_M + \bar{A}\,\Xi_M + \Xi_M^\top\bar{A}^\top.
\end{equation}
In particular, $P_\infty \succeq \Psi_\infty$ whenever $\Omega_M + \bar{A}\,\Xi_M + \Xi_M^\top\bar{A}^\top \succeq 0$, and the gap satisfies
\begin{equation}
\label{dgp_eq:overconfidence_bound}
    \|P_\infty - \Psi_\infty\| \leq \frac{\|\Omega_M + \bar{A}\,\Xi_M + \Xi_M^\top\bar{A}^\top\|}{1 - \|\bar{A}\|^2}.
\end{equation}
\end{proposition}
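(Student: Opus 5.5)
The plan is to subtract the filter Lyapunov equation \eqref{dgp_eq:filter_lyapunov} from the modified Lyapunov equation \eqref{dgp_eq:modified_lyapunov}, both of which Proposition~\ref{dgp_prop:lyapunov} supplies under exactly the present hypotheses. The terms $\Gamma_\infty W\Gamma_\infty^\top$ and $\Lambda_v$ appear identically in both and cancel. Linearity of $X\mapsto \bar{A}X\bar{A}^\top$ then gives \eqref{dgp_eq:overconfidence} directly for $\Delta := P_\infty-\Psi_\infty$, with forcing term $R := \Omega_M + \bar{A}\,\Xi_M + \Xi_M^\top\bar{A}^\top$. Note that $R$ is symmetric, since $\Omega_M$ is symmetric and the two cross terms are transposes of each other.

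For the ordering claim, I will use that $\rho(\bar{A})<1$, as guaranteed by detectability and stabilizability. Under this condition the Stein equation $\Delta=\bar{A}\Delta\bar{A}^\top+R$ has the unique solution $\Delta=\sum_{k\ge0}\bar{A}^kR(\bar{A}^\top)^k$, and the series converges because $\|\bar{A}^k\|\le c\,\gamma^k$ for some $\gamma<1$. Uniqueness holds because the operator $X\mapsto X-\bar{A}X\bar{A}^\top$ has eigenvalues $1-\lambda_i\lambda_j\neq 0$. If $R\succeq0$, every summand $\bar{A}^kR(\bar{A}^\top)^k$ is positive semidefinite, so $\Delta\succeq0$, which is $P_\infty\succeq\Psi_\infty$.

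For the norm bound \eqref{dgp_eq:overconfidence_bound}, I will work in the spectral norm and implicitly assume $\|\bar{A}\|<1$; otherwise the right-hand side is negative or undefined and the bound is vacuous. Submultiplicativity gives $\|\bar{A}^kR(\bar{A}^\top)^k\|\le\|\bar{A}\|^{2k}\|R\|$, and summing the geometric series yields $\|\Delta\|\le\|R\|/(1-\|\bar{A}\|^2)$. An alternative route takes norms directly in the fixed-point equation: $\|\Delta\|\le\|\bar{A}\|^2\|\Delta\|+\|R\|$, which can be rearranged once $\|\bar{A}\|<1$.

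The only delicate point is this caveat. Schur stability $\rho(\bar{A})<1$ does not imply $\|\bar{A}\|<1$. I will therefore state explicitly that \eqref{dgp_eq:overconfidence_bound} is meaningful only when $\|\bar{A}\|<1$. If one wants a bound valid for all Schur-stable $\bar{A}$, the remedy is to replace $\|\cdot\|$ by an induced norm adapted to $\bar{A}$, for example the norm weighted by a Lyapunov matrix, in which the contraction holds. Everything else is routine algebra.
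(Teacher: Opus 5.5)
Your proposal is correct and follows the paper's proof: subtract \eqref{dgp_eq:filter_lyapunov} from \eqref{dgp_eq:modified_lyapunov}, write $P_\infty-\Psi_\infty=\sum_{k\ge0}\bar{A}^k R(\bar{A}^\top)^k$, note the summands are positive semidefinite when $R\succeq 0$, and take norms. Your caveat is well founded, because the paper's ``taking norms'' step tacitly needs $\|\bar{A}\|<1$ and not just $\rho(\bar{A})<1$; moreover, when $\|\bar{A}\|>1$ and $R\neq 0$ the stated bound is false rather than vacuous (the left side is positive and the right side negative), so your Lyapunov-weighted norm is the right repair.
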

\begin{proof}
Subtracting the Lyapunov equations for $\Psi_\infty$ and $P_\infty$ gives~\eqref{dgp_eq:overconfidence}. Iterating yields the explicit solution $P_\infty - \Psi_\infty = \sum_{k=0}^\infty \bar{A}^k\, (\Omega_M + \bar{A}\,\Xi_M + \Xi_M^\top\bar{A}^\top)\, (\bar{A}^\top)^k$. If the driving term is positive semidefinite, every summand is positive semidefinite, so $P_\infty \succeq \Psi_\infty$. Taking norms gives the bound.
\end{proof}
This result generalizes the reduced-order Kalman filter analysis of~\cite{Farrell2001} and~\cite{Rozier2007} to the setting where the full model is infinite-dimensional rather than a finite-dimensional truncation. An optimal reduced-order filter in the sense of~\cite{Bernstein1985} would additionally account for $\Omega_M$ and $\Xi_M$ in the gain computation. In the exact DGP case, $\pi_M^\perp f_t \equiv 0$ for all $t$, so $\Omega_M = \Xi_M = 0$ and $P_\infty = \Psi_\infty$.

Combining Propositions~\ref{dgp_prop:functional_error} and~\ref{dgp_prop:overconfidence} gives a three-way decomposition of the steady-state functional error:
\begin{equation}
\label{dgp_eq:functional_error_threeway}
\begin{aligned}
    \lim_{t\to\infty}\mathbb{E}\!\left[\|f_t - \hat{f}_{t|t}\|_{L_2}^2\right]
    &= \underbrace{\operatorname{tr}\!\left((I_D \otimes \Lambda_U)\, \Psi_\infty\right)}_{\text{noise-limited error}} 
    + \underbrace{\operatorname{tr}\!\left((I_D \otimes \Lambda_U)(P_\infty - \Psi_\infty)\right)}_{\text{leakage-induced gap}} \\
    &\qquad
    + \underbrace{\lim_{t\to\infty}\mathbb{E}\!\left[\|\pi_M^\perp f_t\|_{L_2}^2\right]}_{\text{out-of-subspace residual}}.
\end{aligned}
\end{equation}
The first term is the estimation error due to measurement noise and process disturbances alone, and can be computed directly from the filter output. The second term is the additional in-subspace error caused by the filter ignoring the leakage $\ell_t$, and is nonnegative whenever the condition in Proposition~\ref{dgp_prop:overconfidence} holds. The third term is always nonnegative.

\subsection{Convergence as \texorpdfstring{$M \to \infty$}{M to infinity}}
\label{dgp_subsec:convergence}
As the number of basis functions grows, all sources of approximation error vanish, and the approximate DGP recovers the exact infinite-dimensional estimator.
\begin{proposition}
\label{dgp_prop:convergence}
Under the conditions of Propositions~\ref{dgp_prop:galerkin_convergence} and~\ref{dgp_prop:overconfidence}, and assuming that the basis $\{u_m\}_{m=1}^\infty$ is complete in $L_2(\mathcal{X})$, the following limits hold as $M \to \infty$:
\begin{enumerate}
    \item $\Omega_M \to 0$, $\Xi_M \to 0$, and $\|P_\infty - \Psi_\infty\| \to 0$, so that the overconfidence vanishes.
    \item $\|\pi_M^\perp f_t\|_{L_2} \to 0$ for all $t$, so that the out-of-subspace residual vanishes.
    \item The filter covariance $\Psi_\infty$ converges to the steady-state posterior covariance of the exact infinite-dimensional DGP.
    \item All three terms in~\eqref{dgp_eq:functional_error_threeway} converge, and $\lim_{t\to\infty}\mathbb{E}[\|f_t - \hat{f}_{t|t}\|_{L_2}^2]$ converges to the steady-state functional error of the exact DGP.
\end{enumerate}
\end{proposition}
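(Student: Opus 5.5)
The plan is to prove items 2, 1, 3, 4 in that order. Each step reduces to the strong convergence $\pi_M \to I$ (completeness of the basis) together with the norm convergence $\|\mathcal{K}-\mathcal{K}_M\|\to 0$ from Proposition~\ref{dgp_prop:galerkin_convergence}.

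\textbf{Item 2.} For any fixed $f\in L_2(\mathcal{X};\mathbb{R}^D)$, completeness gives $\|\pi_M^\perp f\|_{L_2}\to 0$; this settles the pathwise statement. For the mean-square version needed later, note that under stability the covariance operator $\mathcal{C}_t$ of $f_t$ is trace class, uniformly in $t$, and converges to a stationary trace-class limit (the argument via~\cite{DaPrato2014} used in Proposition~\ref{dgp_prop:leakage_existence}). Then
\[
\mathbb{E}\|\pi_M^\perp f_t\|^2=\operatorname{tr}(\pi_M^\perp \mathcal{C}_t\pi_M^\perp)+\|\pi_M^\perp \bar f_t\|^2 .
\]
This tends to zero by dominated convergence over the eigen-expansion of $\mathcal{C}_t$. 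The same bound holds for the stationary covariance, so the third term of~\eqref{dgp_eq:functional_error_threeway} vanishes.

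\textbf{Item 1.} Write $\ell_t=\pi_M\mathcal{K}\pi_M^\perp f_{t-1}$. Since $\|\pi_M\mathcal{K}\pi_M^\perp\|\le\|\mathcal{K}\|$ and $\pi_M^\perp f_{t-1}\to0$ in mean square by item 2, we get $\|\Omega_M\|\le\|\mathcal{K}\|^2\lim_t\mathbb{E}\|\pi_M^\perp f_{t-1}\|^2\to0$. Alternatively, $\|\pi_M\mathcal{K}\pi_M^\perp\|\le\|\mathcal{K}-\mathcal{K}_M\|+\|\mathcal{K}\pi_M^\perp\|$, and $\|\mathcal{K}\pi_M^\perp\|\to0$ by compactness of $\mathcal{K}^*$.

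For $\Xi_M$, Cauchy--Schwarz as in Proposition~\ref{dgp_prop:leakage_existence} bounds it by $(\operatorname{tr}P_\infty)^{1/2}(\operatorname{tr}\Omega_M)^{1/2}$. It then suffices to show that $\operatorname{tr}((I_D\otimes\Lambda_U)P_\infty)$ stays bounded in $M$. This follows because the functional error of the filter is bounded by the error of the trivial estimator $\hat f=\pi_M\bar f_t$, which is uniformly bounded by the trace of the stationary covariance.

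The gap bound~\eqref{dgp_eq:overconfidence_bound} needs $\sup_M\|\bar A_M\|<1$, which is not generally available. I would instead use the series representation $\sum_k \bar A^k(\cdot)(\bar A^\top)^k$ together with a uniform exponential bound $\|\bar A_M^k\|\le c\,\gamma^k$ with $\gamma<1$ independent of $M$. This is the main obstacle. I would try to obtain it by viewing $\bar A_M$ as a Galerkin approximation of the infinite-dimensional closed-loop operator $\mathcal{K}-\Gamma_\infty^{\mathrm{ex}}\mathcal{C}$ (where $\mathcal{C}$ denotes the observation operator) and invoking norm convergence plus continuity of the spectral radius. Alternatively, I would work in the Gram-weighted norm and bound the trace of the gap directly through the Lyapunov function $\operatorname{tr}(\Psi_\infty\,\cdot)$.

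\textbf{Item 3.} The filter Riccati recursion of Lemma~\ref{dgp_lem:seperable_kernels} coincides with the exact DGP recursion~\eqref{dgp_eq:DGP_update}--\eqref{dgp_eq:DGP_prediction} applied with the projected kernels $\pi_M K_f\pi_M$, $\pi_M Q_f\pi_M$, $\pi_M Q_v\pi_M$. These projected kernels converge in operator (for $Q_v$, trace) norm. The posterior covariance at each finite time is a continuous function of these operators, because the Kalman update involves only the finite matrix $\Phi\,\mathbf{c}(X,X)\,\Phi^\top+W$, which is invertible when $W\succ0$. Hence finite-horizon convergence holds. Passing to the steady state uses exponential contraction of the Riccati map under detectability and stabilizability, again uniform in $M$ by the same argument as in item 1.

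\textbf{Item 4.} Combine the three decompositions: the first term of~\eqref{dgp_eq:functional_error_threeway} converges by item 3, the second vanishes by item 1, and the third vanishes by item 2.
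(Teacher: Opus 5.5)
Your outline follows the paper's proof: item 2 from completeness, item 1 from vanishing leakage fed into the gap equation \eqref{dgp_eq:overconfidence}, item 3 from continuity of the Riccati solution, and item 4 by combining. Your mean-square treatment of item 2 is sound, and so is the bound on $\Omega_M$ via $\|\pi_M\mathcal K\pi_M^\perp\|\le\|\mathcal K\|$. You also correctly identify what the paper's short proof leaves implicit. It plugs $\Omega_M,\Xi_M\to0$ into \eqref{dgp_eq:overconfidence_bound}, whose denominator $1-\|\bar A_M\|^2$ depends on $M$ (and $\rho(\bar A_M)<1$ does not even give $\|\bar A_M\|<1$). It also appeals to ``continuity of the DARE solution'' for matrices whose size grows with $M$. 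Your proposal does not close this hole either, so items 1 and 3, and hence 4, remain unproved.

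\textbf{(i) The uniform closed-loop bound is missing, and your route to it is circular.} The bound $\|\bar A_M^k\|\le c\gamma^k$ is only announced. Treating $\bar A_M=A_M-\Gamma_\infty^{(M)}C$ as a norm-convergent approximation of $\mathcal K-\Gamma_\infty^{\mathrm{ex}}\mathcal C$ presupposes $\Gamma_\infty^{(M)}\to\Gamma_\infty^{\mathrm{ex}}$, i.e.\ the steady-state part of item 3. But you obtain item 3 from a Riccati contraction that is ``uniform in $M$ by the same argument as in item 1''. Norm convergence of $\mathcal K_M$ alone does not give uniform closed-loop stability; in Galerkin approximation of Riccati equations, uniform stabilizability/detectability of the approximating family is exactly what must be established.

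A non-circular starting point does exist under $\rho(\mathcal K)<1$. Choose $k_0$ with $\|\mathcal K^{k_0}\|<\tfrac12$; then $\|\mathcal K_M^{k_0}-\mathcal K^{k_0}\|\to0$ gives a uniform \emph{open-loop} bound $\|\mathcal K_M^k\|\le c\gamma^k$ for large $M$, and hence uniformly bounded $\Psi_\infty$. You would still need an argument carrying this through the Riccati equation to convergence of the gains before uniform closed-loop stability follows. The alternative via $\operatorname{tr}(\Psi_\infty\,\cdot)$ is not carried out.

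\textbf{(ii) The trivial-estimator bound on $\operatorname{tr}P_\infty$ is invalid.} The approximate filter is MMSE only for the projected model, not for the true system generating $Y_t$. Nothing therefore makes its true error smaller than that of the prior mean $\pi_M\bar f_t$; a misspecified, overconfident Kalman filter can do worse. That comparison holds for $\Psi_\infty$, not for $P_\infty$. Since $P_\infty=\sum_k\bar A^k(\cdots)(\bar A^\top)^k$, bounding it uniformly in $M$ --- and therefore your $\Xi_M\to0$ step --- rests on the same missing uniform stability.

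\textbf{(iii) The finite-horizon continuity claim in item 3 needs more than $L_2$ convergence.} The update uses point values $\mathbf c(x,X)$ and $\mathbf c(X,X)$, which are not continuous with respect to operator or trace norm on $L_2$. You need pointwise (e.g.\ uniform) convergence of the projected kernels at the measurement locations. $L_2$-completeness does not provide this: the Fourier basis on $[-1,1]$ with a non-periodic $Q_f$ fails at the endpoints.
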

\begin{proof}
The first limit follows from Proposition~\ref{dgp_prop:galerkin_convergence}, which gives $\|\pi_M \mathcal{K}\,\pi_M^\perp\| \to 0$, so that both $\Omega_M \to 0$ and $\Xi_M \to 0$. Combined with the bound in~\eqref{dgp_eq:overconfidence_bound}, this gives $\|P_\infty - \Psi_\infty\| \to 0$. The second limit follows from completeness of the basis. The third limit uses continuity of the DARE solution in the system matrices~\cite{Anderson2005}. The fourth limit follows from Proposition~\ref{dgp_prop:functional_error} and items 1--3.
\end{proof}
The practical implication is that increasing $M$ reduces all error contributions with diminishing returns, which is verified numerically in Section~\ref{dgp_sec:results}.

\section{Numerical Examples}
\label{dgp_sec:results}
This section presents two numerical examples that demonstrate the DGP estimator and its separable kernel approximation from Section~\ref{dgp_sec:analysis}. Each example begins with the derivation of the evolution kernel from the underlying PDE, following the connection described in Section~\ref{dgp_subsec:problem}.

\subsection{Example 1: Heat Equation}
\label{dgp_sec:results_heat}
Consider the one-dimensional heat equation on $\mathcal{X} = \mathbb{R}$,
\begin{equation}
\label{dgp_eq:heat_pde}
    \frac{\partial f}{\partial \tau}(x,\tau) = \alpha \frac{\partial^2 f}{\partial x^2}(x,\tau),
\end{equation}
in which $\alpha > 0$ is the thermal diffusivity. The Green's function of~\eqref{dgp_eq:heat_pde} on $\mathbb{R}$ is~\cite[\S2.3]{Evans2010}
\begin{equation}
\label{dgp_eq:heat_greens}
    G(x,s,\tau) = \frac{1}{\sqrt{4\pi \alpha \tau}} \exp\!\left(-\frac{(x-s)^2}{4\alpha \tau}\right),
\end{equation}
which maps the initial condition at location $s$ to the solution at location $x$ after time $\tau$.

\subsubsection{Evolution Kernel}
\label{dgp_sec:heat}
Defining $f_t(x) := f(x,t\Delta)$ for $t \in \mathbb{N}$ and setting $\tau = \Delta$ in~\eqref{dgp_eq:heat_greens} gives the evolution kernel
\begin{equation}
\label{dgp_eq:heat_kernel}
    k_f(x,s) = \frac{1}{\sqrt{4\pi \alpha \Delta}} \exp\!\left(-\frac{(x-s)^2}{4\alpha \Delta}\right),
\end{equation}
which is a squared exponential kernel with length scale $\sigma_k = \sqrt{2\alpha \Delta}$ and amplitude $a_k = (2\pi\sigma_k^2)^{-1/2}$. This is a scalar ($D=1$) problem, so $K_f$ reduces to a scalar-valued kernel.

\subsubsection{Setup}
Consider the heat equation on the bounded domain $\mathcal{X} = [-1,1]$ with evolution kernel~\eqref{dgp_eq:heat_kernel}. On a bounded domain, the system is naturally strictly stable in the sense of \textit{Definition}~\ref{dgp_def:stability}, since mass that diffuses past the domain boundary is lost.

To assess the estimation accuracy, a ground truth model is generated using 625 discretized basis functions, while the approximate DGP uses up to 101 Fourier bases. The initial condition covariance, process noise covariance, and measurement noise covariance are
\begin{align*}
Q_f(x,x') &= a_f \exp\!\left(-\frac{(x-x')^2}{2\sigma_f^2}\right),
\qquad Q_v(x,x') &= a_v \exp\!\left(-\frac{(x-x')^2}{2\sigma_v^2}\right),\\
Q_w(x,x') &= \sigma_w^2 \,\delta(x-x'),
\end{align*}
with $a_f = 0.1$, $\sigma_f = 0.3$, $a_v = 0.1$, $\sigma_v^2 = 0.1$, and $\sigma_w^2 = 0.1$. The initial condition covariance $Q_f$ ensures that $f_0$ smoothly deviates from its mean. The measurement noise $Q_w$ is spatially white. The initial mean function is
\begin{equation*}
    \bar{f}_0(x) =
    \begin{cases}
        10 & |x|<0.05, \\
        0 & \text{otherwise,}
    \end{cases}
\end{equation*}
which simulates an impulse-like function at the center of the domain. At each time step, $p=5$ observation locations are drawn uniformly from $\mathcal{X}$.

\begin{figure}[!t]
    \centering
    \includegraphics[width=0.98\textwidth]{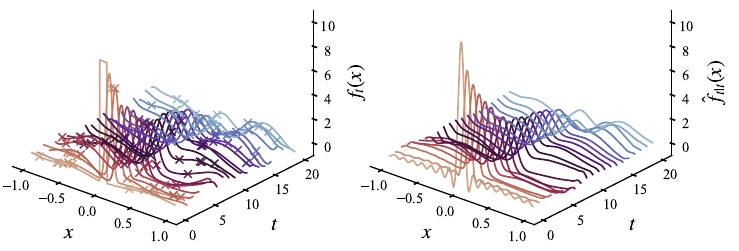}
    \vspace{-0.3cm}
    \caption{Heat equation example. Left: ground truth $f_t(x)$ with randomly placed observation locations ({\color{twilightcenter}$\boldsymbol{\times}$}). Right: DGP estimate $\hat{f}_{t|t}(x)$ using $M=31$ Fourier basis functions. Each line represents one discrete time step; line colour indicates the time index.}
    \label{dgp_fig:heat_estimation}
\end{figure}

Figure~\ref{dgp_fig:heat_estimation} shows the ground truth and the DGP estimate side by side. The smoothing of the function as the time index $t$ increases is characteristic of the heat equation: the initial impulse diffuses and spreads over the spatial domain. The effect of the process disturbances $v_t$ is visible in the ground truth, for instance near $x = -0.5$ at time step $t = 3$. The DGP estimate tracks the evolving function well despite these disturbances. At $t = 0$, however, a noticeable approximation error is visible: the discontinuous initial condition $\bar{f}_0$ cannot be exactly represented by a finite number of smooth Fourier basis functions, and the resulting overshoot is a manifestation of the Gibbs phenomenon. Because the heat kernel smooths the function over time, this representation error vanishes quickly as the ground truth becomes increasingly well-suited to the Fourier basis.

\begin{figure}[!t]
    \centering
    \includegraphics[width=0.8\textwidth]{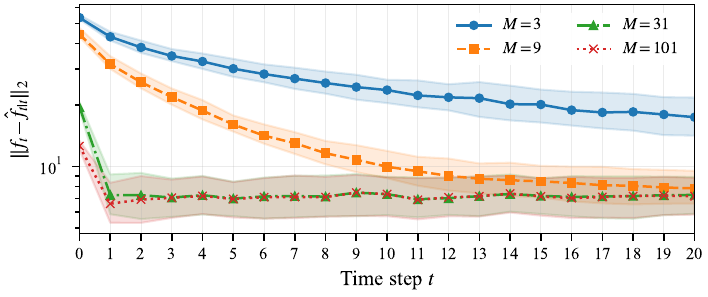}
    \vspace{-0.3cm}
    \caption{$2$-norm of the estimation error $\| f_t - \hat{f}_{t|t} \|_2$ over time steps $t$ for the heat equation example, comparing $M \in \{3, 9, 31, 101\}$ Fourier bases. Lines show the median over $500$ Monte Carlo runs. Shaded regions indicate the 25th--75th percentile range.}
    \label{dgp_fig:heat_convergence}
\end{figure}

Figure~\ref{dgp_fig:heat_convergence} shows the $2$-norm of the estimation error for different numbers of basis functions $M$. Increasing $M$ reduces the estimation error with diminishing returns, which is consistent with Proposition~\ref{dgp_prop:convergence}: as $M$ increases, all three error contributions in~\eqref{dgp_eq:functional_error_threeway} decrease. The error also decreases over time for all choices of $M$, which can be explained by the smoothing effect of the heat kernel: as the true function evolves, it becomes easier to represent in the Fourier basis. This effect is least visible for $M=101$, which can already represent non-smooth functions such as the initial impulse.

\subsection{Example 2: Wave Equation}
\label{dgp_sec:results_wave}

Consider the one-dimensional wave equation on $\mathcal{X} = \mathbb{R}$,
\begin{equation}
\label{dgp_eq:wave_pde}
    \frac{\partial^2 \varphi}{\partial \tau^2}(x,\tau) = c^2 \frac{\partial^2 \varphi}{\partial x^2}(x,\tau),
\end{equation}
in which $c > 0$ is the wave propagation speed. The wave equation is second-order in time, and its solution depends on two initial conditions: the initial position $\varphi(x,0) = \varphi_0(x)$ and the initial velocity $\psi(x,0) := \frac{\partial \varphi}{\partial \tau}(x,0) = \psi_0(x)$. By D'Alembert's formula, the solution is
\begin{equation}
\label{dgp_eq:dalembert}
\begin{aligned}
    \varphi(x,\tau) 
    = \frac{1}{2}\!\left[\varphi_0(x - c\tau) + \varphi_0(x + c\tau)\right]
    + \frac{1}{2c}\int_{x-c\tau}^{x+c\tau} \psi_0(s)\, ds.
\end{aligned}
\end{equation}

\subsubsection{Evolution Kernels}
\label{dgp_sec:wave}
Because the wave equation is second-order in time, we define $\varphi_t(x) := \varphi(x,t\Delta)$ and $\psi_t(x) := \frac{\partial \varphi}{\partial \tau}(x,t\Delta)$ for $t \in \mathbb{N}$, and set $f_t(x) := [\varphi_t(x),\, \psi_t(x)]^\top \in \mathbb{R}^2$. Substituting the state at discrete time $t$ as the initial condition and evaluating D'Alembert's formula~\eqref{dgp_eq:dalembert} at $\tau = \Delta$ gives the position update
\begin{equation}
\label{dgp_eq:wave_f_update}
\begin{aligned}
    \varphi_{t+1}(x) 
    = \frac{1}{2}\!\left[\varphi_t(x - c\Delta) + \varphi_t(x + c\Delta)\right]
    + \frac{1}{2c}\int_{x-c\Delta}^{x+c\Delta} \psi_t(s)\, ds.
\end{aligned}
\end{equation}
The velocity update is obtained by differentiating~\eqref{dgp_eq:dalembert} with respect to time. By the chain rule, $\frac{\partial}{\partial \tau}\varphi_t(x \pm c\tau)\big|_{\tau=\Delta}$ produces a factor $\pm c$ and converts the time derivative into a spatial derivative $\varphi_t'$, giving
\begin{equation}
\label{dgp_eq:wave_g_update}
\begin{aligned}
    \psi_{t+1}(x) 
    = \frac{c}{2}\!\left[\varphi_t'(x + c\Delta) - \varphi_t'(x - c\Delta)\right]
    + \frac{1}{2}\!\left[\psi_t(x - c\Delta) + \psi_t(x + c\Delta)\right],
\end{aligned}
\end{equation}
where $\varphi_t'(\cdot) := \frac{d}{d(\cdot)} \varphi_t(\cdot)$ denotes the spatial derivative.

Following~\eqref{dgp_eq:general_solution}, the evolution kernels are identified by rewriting~\eqref{dgp_eq:wave_f_update}--\eqref{dgp_eq:wave_g_update} as integrals against the state. From~\eqref{dgp_eq:wave_f_update}, the shifted evaluations of $\varphi_t$ and the integral over $\psi_t$ give
\begin{equation}
\label{dgp_eq:wave_kernels_exact}
\begin{aligned}
    k_{\varphi\varphi}(x,s) &= \tfrac{1}{2}\!\left[\delta(x\!-\!s\!-\!c\Delta) + \delta(x\!-\!s\!+\!c\Delta)\right],\\
    k_{\varphi\psi}(x,s) &= \tfrac{1}{2c}\,\mathbf{1}_{[x-c\Delta,\, x+c\Delta]}(s),
\end{aligned}
\end{equation}
where $\mathbf{1}_{[a,b]}(s)$ denotes the indicator function on $[a,b]$. From~\eqref{dgp_eq:wave_g_update}, the spatial derivatives of $\varphi_t$ yield
\begin{equation}
\label{dgp_eq:wave_kernels_exact2}
\begin{aligned}
    k_{\psi\varphi}(x,s) &= \tfrac{c}{2}\!\left[\delta'(x\!-\!s\!+\!c\Delta) - \delta'(x\!-\!s\!-\!c\Delta)\right],\\
    k_{\psi\psi}(x,s) &= k_{\varphi\varphi}(x,s),
\end{aligned}
\end{equation}
where $\delta'$ denotes the distributional derivative of the Dirac delta. The symmetry $k_{\psi\psi} = k_{\varphi\varphi}$ is a direct consequence of D'Alembert's formula. With these kernels, the updates~\eqref{dgp_eq:wave_f_update}--\eqref{dgp_eq:wave_g_update} take the form of a coupled IDE:
\begin{equation}
\label{dgp_eq:wave_ide_system}
    \underbrace{\begin{bmatrix}
        \varphi_{t+1}(x) \\ \psi_{t+1}(x)
    \end{bmatrix}}_{f_{t+1}(x)}
    =
    \int_\mathcal{X}
    \underbrace{\begin{bmatrix}
        k_{\varphi\varphi}(x,s) & k_{\varphi\psi}(x,s) \\
        k_{\psi\varphi}(x,s) & k_{\psi\psi}(x,s)
    \end{bmatrix}}_{K_f(x,s)\,\in\,\mathbb{R}^{2\times 2}}
    \underbrace{\begin{bmatrix}
        \varphi_t(s) \\ \psi_t(s)
    \end{bmatrix}}_{f_t(s)} ds,
\end{equation}
which is an instance of~\eqref{dgp_eq:dynamic_function_system} with $D=2$. Because the kernels in~\eqref{dgp_eq:wave_kernels_exact}--\eqref{dgp_eq:wave_kernels_exact2} are distributions rather than ordinary functions, they are replaced by Gaussian approximations with smoothing parameter $\epsilon > 0$ for the numerical implementation. The wave equation conserves energy, so $\rho(\mathcal{K}) = 1$ and the steady-state results of Propositions~\ref{dgp_prop:leakage_existence}--\ref{dgp_prop:convergence} do not formally apply. However, the closed-loop matrix $\bar{A}$ from~\eqref{dgp_eq:closed_loop} is Schur stable regardless of whether the open-loop system is stable, so the filter can track a bounded signal even when the open-loop dynamics are marginally stable. In the disturbance-free setting below, the ground truth is the bounded analytical solution, and the estimation error remains bounded because the filter stably incorporates new observations at each time step.

\subsubsection{Setup}
The example is evaluated on the bounded domain $x \in [-10, 10]$ with wave speed $c = 2$ and time step $\Delta = 0.2$. The state is $f_t(x) = [\varphi_t(x),\, \psi_t(x)]^\top$, where $\varphi_t$ is the position function and $\psi_t$ is the velocity function. The distributional evolution kernels are replaced by Gaussian approximations with smoothing parameter $\epsilon = h$, where $h > 0$ is the spatial grid spacing. As $\epsilon \to 0$, these smoothed kernels converge to the exact distributional kernels.

The ground truth is the analytical D'Alembert solution $\varphi(x,\tau) = \tfrac{1}{2}[m_0(x - c\tau) + m_0(x+c\tau)]$ with initial position $m_0(x) = 10\exp(-x^2/2)$ and initial velocity $\psi_0(x) = 0$. This produces two counter-propagating Gaussian pulses. Because this solution is deterministic, there are no process disturbances. The DGP estimator uses $M=31$ Fourier basis functions per state component (so $2M = 62$ coefficients in total) with $p=3$ randomly placed measurements at each time step and measurement noise variance $\sigma_w^2 = 10^{-5}$. The prior mean is set to zero, so that the estimator must discover the propagating wavefronts entirely from the observations.

Unlike the heat equation, which diffuses energy over the domain, the wave equation preserves the shape of propagating wavefronts. This makes the estimation problem fundamentally different: the evolution kernel is not smoothing, and the function does not become easier to approximate over time.

\begin{figure}[!t]
    \centering
    \includegraphics[width=0.98\textwidth]{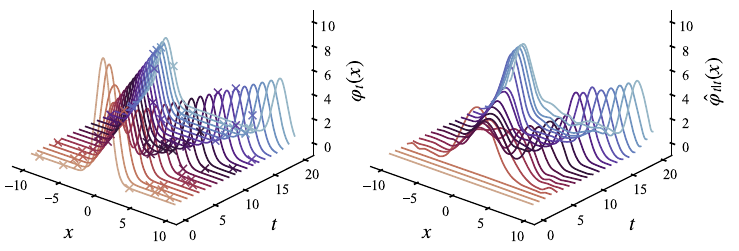}
    \vspace{-0.3cm}
    \caption{Wave equation example. Left: analytical D'Alembert solution $\varphi_t(x)$ with observation locations ({\color{twilightcenter}$\boldsymbol{\times}$}). Right: DGP estimate $\hat{\varphi}_{t|t}(x)$ using $M=31$ Fourier bases. Each line represents one discrete time step; line colour indicates the time index.}
    \label{dgp_fig:wave_estimation}
\end{figure}

Figure~\ref{dgp_fig:wave_estimation} shows the ground truth and the DGP estimate for the wave equation. The two counter-propagating pulses are clearly visible in the analytical solution. Despite starting from a zero prior mean and receiving only $p=3$ noisy measurements per time step, the DGP estimator recovers the shape of both propagating wavefronts within a few time steps. Because the wave equation does not smooth the solution, the Fourier basis must resolve the sharp features of the pulse at every time step, in contrast to the heat equation example where the kernel itself reduces the required bandwidth over time.

\section{Conclusions}
\label{dgp_sec:conclusions}
This paper presented the Dynamic Gaussian Process (DGP), which unifies Gaussian process regression and Kalman filtering for the estimation of functions governed by integro-difference equations. The stability and approximation error analysis showed that the functional estimation error decomposes exactly into three distinct components. Specifically, this error comprises the noise-limited error, the leakage-induced gap, and the out-of-subspace residual, all of which vanish as $M \to \infty$. An application of the estimation method to the heat equation example confirms this convergence, while an application to the wave equation demonstrated that the filter can track vector-valued functions.

Several directions for future research remain. Deriving explicit rates for the leakage covariance $\|\Omega_M\|$ as a function of $M$, the domain $\mathcal{X}$, the kernel $K_f$, and the stability margin would yield practical guidelines for choosing $M$. Extensions to generalize the model with control inputs or to include nonlinear PDEs are of interest. The posterior covariance of the DGP can also be used to determine where to place the next observation, enabling frameworks for sensor scheduling and Bayesian optimization.

\bibliographystyle{siamplain}
\bibliography{references}

\end{document}